\documentclass[11pt]{article}

\usepackage[T1]{fontenc}
\usepackage{microtype}
\usepackage[margin=1in]{geometry}

\usepackage{amsmath,amssymb,amsthm}
\usepackage{mathtools}
\usepackage{bm}

\usepackage{graphicx}
\usepackage{booktabs}
\usepackage{multirow}
\usepackage{array}

\usepackage{hyperref}
\usepackage{natbib}
\usepackage[capitalize,noabbrev]{cleveref}

\usepackage{algorithm}
\usepackage{algorithmic}

\usepackage{xcolor}

\theoremstyle{plain}
\newtheorem{theorem}{Theorem}[section]
\newtheorem{proposition}[theorem]{Proposition}

\theoremstyle{definition}

\theoremstyle{remark}
\newtheorem{remark}[theorem]{Remark}

\newcommand{\E}{\mathbb{E}}
\newcommand{\Prob}{\mathbb{P}}

\hypersetup{
  colorlinks=true,
  linkcolor=blue!70!black,
  urlcolor=blue!70!black,
  citecolor=green!50!black
}

\title{Fast Response or Silence: Conversation Persistence in an Artificial Intelligence Agent Social Network}

\author{
  Aysajan Eziz\thanks{Ivey Business School, Western University, Canada. Email: \texttt{aeziz@ivey.ca}}
}

\date{February 2026}

\begin{document}
\maketitle

\begin{abstract}
Autonomous artificial intelligence (AI) agents are beginning to populate social platforms, but it is still unclear whether they can sustain the reciprocal interaction needed for extended coordination. We study Moltbook, an AI-agent social network, in a first-week snapshot and introduce a two-part persistence decomposition (direct-reply incidence and conditional reply timing). Across the observed sample, Moltbook discussions are often concentrated in first-layer reactions rather than extended chains. Most comments never receive a direct reply, reciprocal interaction is rare, and when replies do occur they arrive almost immediately---typically within seconds---consistent with a low-incidence/fast-conditional-response regime. We use an exponential-equivalent kernel half-life only as a secondary diagnostic timescale. Moltbook is often described as running on an approximately four-hour periodic activation cadence (``heartbeat''); in our window we find at most weak evidence of global phase synchronization and no dominant four-hour spectral line. A contemporaneous Reddit baseline analyzed with the same estimators shows substantially deeper threads and much longer reply persistence. Overall, early agent social interaction on Moltbook fits a ``low-incidence/fast-conditional-response'' regime, consistent with limited return-to-thread attention. Because this is observational, similar signatures could also arise from platform ranking/visibility, moderation throttling, or batched/scheduled execution.
\end{abstract}

\section{Introduction}
\label{sec:introduction}

Recent advances in large language models (LLMs) have enabled a new class of autonomous artificial intelligence (AI) agents capable of sustained interaction with digital environments. One manifestation of this capability is \emph{Moltbook}, a social network launched in January 2026 that restricts posting privileges to AI agents while permitting human observation \citep{willison2026moltbook}. In this paper, an \emph{AI agent account} denotes an account whose posting and commenting actions are generated by an LLM-driven agent process rather than direct human operation. In the first archived week, the platform accumulated over 25{,}000 such agent accounts and 119{,}677 posts \citep{simulamet2026observatoryarchive}, creating a large, accessible dataset of agent-to-agent interaction.

This emergence of agent-populated social platforms raises fundamental questions about collective AI behavior. Can autonomous agents sustain the extended, multi-turn dialogues necessary for meaningful collaboration? How might architectural constraints---particularly context-window limitations and periodic activation schedules---shape the temporal dynamics of agent discourse? And what distinguishes agent-driven conversations from human-driven ones in structural and temporal terms?

\subsection{Motivation and Research Questions}

Early public commentary proposed that Moltbook agents may be better at
\emph{initiating} projects than \emph{sustaining} them and may follow roughly
4-hour check-in routines \citep{alexander2026afterweekend,willison2026moltbook}.
We use these statements as qualitative hypothesis motivation only, not as
factual evidence; empirical claims in this paper are derived from the archived
Observatory data. Because per-account heartbeat events are unobserved, we test
only aggregate periodic signatures, allowing for weak detectability under
dephasing/jitter. This framing is paired with finite context windows, which
motivate tests of rapid within-thread staleness.

Conversation persistence can be interpreted through canonical constructs in
service systems and stochastic-process modeling. Aggregate availability and
heartbeat schedules correspond to time-varying service capacity and cyclic
service regimes \citep{whitt2004efficiency,jouini2010online}. Staleness decay
corresponds to abandonment and impatience mechanisms in waiting systems
\citep{whitt2004efficiency,reed2012hazard,jouini2010online}. Reply incidence
corresponds to completion/throughput outcomes under constrained capacity, and
thread depth corresponds to branching stability versus subcriticality
\citep{harris1963theory}. This mapping motivates the measurement-target design
used in this study.

From an operations research and management science (OR/MS) decision-support perspective, the platform is a distributed
service system in which pending parent comments are competing jobs for limited
agent attention. The operational design problem is to allocate limited
intervention budget across incidence-lift levers (for example resurfacing or
visibility reminders) versus timing-lift levers (for example memory/context
support) to maximize downstream coordination throughput, such as
conversation depth and repeat participation (re-entry).
In this framing, our contribution is not only descriptive: the incidence margin
diagnoses participation bottlenecks, while the conditional timing margin
diagnoses latency bottlenecks, and the two margins define a policy-priority
control panel.

Our objective is to formalize and empirically validate a two-part control panel
for conversational persistence in Moltbook's first week: (i) direct-reply
incidence and (ii) conditional reply latency. We map these two margins to depth
and related thread-structure outcomes (branching, reciprocity, and re-entry).
We include heterogeneity analyses only when they clarify which margin constrains
coordination.

We do not attempt dedicated thread-duration inference, nor do we treat
cross-platform baseline comparison mechanics or detailed periodicity machinery
as primary claims. Thread duration is reported only as an ancillary descriptive
metric, and Reddit baseline context and periodicity are treated as secondary
contextual checks.

\paragraph{Identification scope}
This paper reports observational measurement targets from archived first-week
data and does not identify causal intervention effects. Periodicity and
cross-platform baseline comparisons are contextual checks rather than primary
identification strategies. For heartbeat periodicity, we distinguish
statistical rejection of exact uniformity from practical signal strength:
at large \(N\), Rayleigh tests can reject uniformity even when concentration
is too small to imply meaningful global synchronization.
Accordingly, patterns described as consistent with architectural constraints are
also compatible with alternative platform-side explanations such as
ranking/visibility effects, moderation throttling, or batched/scheduled agent
execution.

\subsection{Hypotheses}
\label{sec:introduction:hypotheses}

Guided by the horizon-limited cascade framework in \Cref{sec:model} and prior
qualitative commentary used for hypothesis motivation
\citep{willison2026moltbook, alexander2026afterweekend},
we test four hypotheses. H1a posits short exponential-equivalent kernel half-life
(diagnostic) values consistent with architectural staleness constraints. H1b
posits that heartbeat scheduling can
generate aggregate periodic structure near the hypothesized cadence
($\tau \approx 4$ hours) when check-ins are sufficiently synchronized; under
dephasing or jitter, aggregate detectability may be weak in finite samples.
H2 posits shallower, more root-concentrated Moltbook trees than human-platform
baselines, with lower reciprocity and conditioning-sensitive re-entry profiles; the
re-entry contrast is treated as conditioning-sensitive and may change direction
across baseline conditioning sets.
H3 posits topic-level moderation of persistence, including systematic differences
in kernel half-life diagnostic values and depth across submolts. H4 posits that
agent-level claim-status covariates are associated with variation in reply
incidence and conversational persistence.

We operationalize these hypotheses via the measurement targets defined in \Cref{sec:model}
and the estimators in \Cref{sec:methods}, and evaluate them in \Cref{sec:results}.
For compact orientation, H1a/H1b correspond to \Cref{sec:results:decomposition,sec:results:periodicity},
H2 to \Cref{sec:results:structure,sec:results:reddit-full-scale},
H3 to \Cref{sec:results:heterogeneity}, and H4 to \Cref{sec:results:agent-covariates}.

\subsection{Preview of Findings}

In this first-week snapshot, Moltbook conversations are predominantly star-shaped,
with minute-scale kernel half-life diagnostic values, low direct-reply
incidence, and minimal
reciprocity. Periodicity effect size is weak (\(r=0.0308\)); Rayleigh testing
rejects exact uniformity at large \(N\), but spectral diagnostics do not show a
strong 4-hour line.
A run-scoped Reddit baseline shows materially longer
persistence and deeper threads. Taken together, these patterns are compatible
with a ``low-incidence/fast-conditional-response'' regime and indicate that
incidence, not conditional latency, is the dominant operational bottleneck at
minute-to-hour horizons in this snapshot.

\subsection{Approach and Contributions}

Aligned to this objective, the paper delivers three contributions. First, we
formalize an OR diagnostic for horizon throughput \(q_h\): a decomposition into
incidence and conditional timing on a common risk set, with a derived
cost-adjusted priority rule for deciding whether to invest in participation
uplift or latency reduction as formalized in \Cref{sec:model:or-diagnostic}. This turns a
standard survival decomposition into a decision-support control panel. Second,
using the Moltbook Observatory Archive
\citep{simulamet2026observatoryarchive}, we provide an analytic
decision-support evaluation of that control panel using observed margins: we
quantify how incidence versus conditional timing changes propagate to depth and
re-entry-relevant coordination pressure under budgeted interventions. Third, we
differentiate this contribution from conversation-cascade studies that
typically emphasize unconditional delay or geometry summaries
\citep{crane2008robust,rizoiu2017expecting,gomez2013structure,aragon2017thread,meital2024branch}:
by separating incidence from conditional speed under right-censoring and then
mapping both to OR targets, we identify which margin is operationally binding
rather than treating persistence as a single aggregate statistic.

\subsection{Paper Organization}

The remainder of this paper is organized as follows. \Cref{sec:background}
reviews stochastic-process foundations for interaction persistence, then network structure and agent-platform context.
\Cref{sec:model} presents the minimal model and primary measurement targets.
\Cref{sec:data} describes data construction and preprocessing.
\Cref{sec:methods} details estimation and empirical procedures.
\Cref{sec:results} reports empirical findings and model-consistency checks.
\Cref{sec:discussion} interprets implications, including short downstream design
considerations. \Cref{sec:limitations} addresses limitations and ethical
considerations. \Cref{sec:conclusion} concludes with key takeaways and future
work.

\section{Background and Related Work}
\label{sec:background}

This section positions the paper against prior work that is directly
operationalized in our empirical design. We organize the review around three
pillars that map one-to-one to the measurement targets in \Cref{sec:model}:
event-time reinforcement, censored time-to-reply analysis, and
branching/cascade structure. We then place agent-populated social platforms as
the application context.

\subsection{Pillar 1: Event-Time Reinforcement and Service Availability}

Self-exciting point-process models provide a canonical way to represent
temporal reinforcement: each event raises short-run intensity for subsequent
events, with decay over time \citep{hawkes1971spectra}. This class has been
used both in operations-facing settings and in online interaction systems.
Within service operations, recent work applies Hawkes-style formulations to
co-production interaction streams \citep{daw2025coproduction}. In computational
social systems, related models have been used to describe popularity bursts,
response cascades, and diffusion timing on digital platforms
\citep{crane2008robust,zhao2015seismic,rizoiu2017expecting}. Across these
domains, the common mechanism is that recent activity makes near-term follow-up
activity more likely.

Our use of this pillar is intentionally measurement-first rather than
model-first. We do not claim that a fully specified Hawkes process is
identified as the data-generating process for Moltbook. Instead, Hawkes
intuition motivates how we interpret the availability--staleness mechanism and
why a recency-sensitive timescale is relevant. This is why the
exponential-equivalent half-life is treated as a secondary diagnostic summary,
while the primary estimands remain direct-reply incidence and conditional reply
timing.

\subsection{Pillar 2: Censored Reply-Time Analysis via Survival and Hazards}

Survival and hazard frameworks are the standard statistical foundation for
time-to-event outcomes under censoring \citep{cox1972regression}. In service
systems, these tools connect naturally to abandonment and impatience: even when
potential service opportunities exist, many cases do not complete within the
observation horizon \citep{whitt2004efficiency,reed2012hazard,jouini2010online}.
That logic maps directly to reply dynamics in threaded conversations, where
many parent--child opportunities remain unrealized during finite windows.

This paper therefore adopts a two-part persistence decomposition that is
consistent with hazard-based reasoning: an incidence margin (whether a direct
reply occurs) and a conditional timing margin (how fast it occurs when it does).
The decomposition is designed to separate distinct operational bottlenecks. Low
incidence indicates scarce engagement capacity on candidate parent comments,
whereas slower conditional timing indicates latency conditional on engagement.
This separation is central to our OR diagnosis and to the estimands reported in
the main results.

\subsection{Pillar 3: Branching/Cascade Structure as Depth Consequences}

Branching-process theory links local reproduction behavior to global cascade
geometry \citep{harris1963theory}. In subcritical regimes, trees remain shallow
and mass concentrates near roots; in higher-reproduction regimes, deeper tails
become more prevalent. This provides the structural bridge from reply-level
persistence mechanisms to thread-level coordination outcomes.

Empirical discussion-network studies provide concrete structural benchmarks.
\citet{gomez2013structure} document strong root bias in online discussion
cascades, \citet{aragon2017thread} relate threaded interfaces to reciprocity,
and \citet{meital2024branch} show that temporal and structural signals jointly
shape where replies attach in Reddit trees. We use this strand as context for
our structural readouts, specifically depth profiles, branching-by-depth,
reciprocity, and re-entry. In our framework, these are consequences of the
persistence regime rather than separate primary mechanisms.

\subsection{Application Context and Contribution Boundary}

The rise of LLM-based agent systems makes these measurement questions
operationally salient in a new domain. Early simulation evidence shows that
small groups of language-model agents can form relationships and coordinate in
controlled environments \citep{park2023generative}. Moltbook extends that
setting to a public, large-scale platform with archived interaction traces
\citep{simulamet2026observatoryarchive}, enabling direct observation of
agent-to-agent discussion dynamics.

Relative to prior literature, our contribution is not to introduce a new
stochastic-process class. Instead, we provide a mechanism-to-measurement
mapping for an agent-populated service setting: event-time reinforcement
intuition motivates timing diagnostics, survival logic motivates the two-part
incidence/timing decomposition under censoring, and branching logic links those
estimands to depth and coordination limits in observed thread structure.

\section{Framework and Estimands}
\label{sec:model}

This section defines the measurement targets (estimands) used to quantify
conversation persistence and thread structure in agent social networks. We
introduce a two-part decomposition of persistence into direct-reply incidence
and conditional reply timing, define structural outcomes (depth, branching,
reciprocity, and re-entry), and derive an operational diagnostic that links
these margins to horizon throughput.

\subsection{Scope and Mechanism Interpretation}
\label{sec:model:scope}

We interpret persistence through two mechanisms. First, platform-level
\emph{availability} captures when agents are active and likely to observe
threads. Second, within-thread \emph{staleness} captures how reply propensity
falls as a parent comment ages.

We represent exposure/resurfacing effects through an effective participation
amplitude,
\begin{equation}
\label{eq:alpha-exposure}
\alpha_i=\bar{\alpha}_i\,\xi_i,
\end{equation}
where \(\bar{\alpha}_i\) is baseline reply propensity and \(\xi_i\) is an
exposure multiplier that absorbs visibility/ranking and resurfacing effects in
observational data. Staleness is captured by a decay-rate parameter \(\beta_i\).
The joint interpretation is mechanistic rather than causal: higher
availability/exposure and slower staleness decay are associated with greater
observed persistence.

\subsection{Structural Definitions: Depth, Branching, Reciprocity, and Re-Entry}
\label{sec:model:branching}

Let \(\mathcal{J}\) denote threads (root posts). For thread \(j\in\mathcal{J}\),
events are indexed by \(n\in\{0,1,\ldots,N_j\}\), with \(n=0\) the root and
\(n\ge 1\) comments. Event \(n\) is \((t_{jn},a_{jn},p_{jn})\), where
\(t_{jn}\) is timestamp, \(a_{jn}\) author, and \(p_{jn}\in\{0,\ldots,n-1\}\)
its parent index.

Depth is defined recursively:
\begin{equation}
\label{eq:depth}
d_{j0}=0, \qquad d_{jn}=d_{j,p_{jn}}+1 \quad (n\ge 1).
\end{equation}
Maximum thread depth is \(D_j:=\max_{0\le n\le N_j} d_{jn}\).

For node \((j,n)\), let
\(c_{jn}:=\#\{r>n: p_{jr}=n\}\) be its direct-child count. The
branching-factor profile by depth is
\begin{equation}
\label{eq:branching-by-depth}
\bar c_k:=\E\!\left[c_{jn}\mid d_{jn}=k\right],
\end{equation}
which distinguishes root-heavy star patterns from deeper cascading activity.

For thread \(j\), define directed reply edges
\begin{equation}
\label{eq:edge-set}
E_j:=\{(u,v): \exists n\ge1\ \text{with}\ a_{jn}=u,\ a_{j,p_{jn}}=v,\ u\neq v\}.
\end{equation}
Let
\begin{equation}
\label{eq:dyad-set}
\Delta_j:=\{\{u,v\}: (u,v)\in E_j\ \text{or}\ (v,u)\in E_j\}
\end{equation}
be the unordered dyads with at least one directional reply. Thread-level
reciprocity is
\begin{equation}
\label{eq:reciprocity-rate}
\mathrm{R}_j
:=\frac{1}{|\Delta_j|}\sum_{\{u,v\}\in\Delta_j}
\mathbf{1}\{(u,v)\in E_j\ \text{and}\ (v,u)\in E_j\}.
\end{equation}

Thread-level re-entry is
\begin{equation}
\label{eq:reentry-rate}
\mathrm{RE}_j
:=\frac{\#\{n: a_{jn}\in\{a_{j1},\ldots,a_{j,n-1}\}\}}{N_j}.
\end{equation}
As in the empirical implementation, this definition conditions on non-root
comments; root author \(a_{j0}\) enters only if it appears later in comments.

\subsection{Primary Estimands and Two-Part Decomposition}
\label{sec:model:estimands}
\label{sec:methods:def-est}

For each candidate parent comment \(m\), let \(T_m\) denote first direct-reply
time (if any), \(C_m\) right-censoring time from parent timestamp to observation
end, and
\[
s_m:=\min(T_m,C_m), \qquad \delta_m:=\mathbf{1}\{T_m\le C_m\}.
\]
All incidence and timing estimands in this paper are defined for non-root
comments as candidate parents (comment-to-comment replies), not for root posts.

The empirical persistence model is two-part:
\begin{align}
\delta_m &\sim \mathrm{Bernoulli}(\pi_m), \label{eq:two-part-incidence}\\
\log\!\left[-\log(1-\pi_m)\right] &= x_m^\top\eta, \label{eq:two-part-incidence-link}
\end{align}
and
\begin{equation}
T_m \mid (\delta_m=1,z_m)\sim F_\theta(\cdot\mid z_m).
\label{eq:two-part-timing}
\end{equation}
This is a \emph{hurdle / cure-style decomposition} of persistence into a
\emph{participation margin} (whether any direct reply occurs) and a
\emph{conditional timing margin} (how quickly replies arrive once participation
occurs).

Primary estimands are:
\begin{itemize}
\item \textbf{Horizon-standardized incidence (primary):}
for horizon \(h\in\{5\text{ min},1\text{ h}\}\),
\[
p_h:=\Prob\!\left(T_m\le h \,\middle|\, C_m\ge h \ \text{or}\ T_m\le h\right),
\]
with empirical estimation by the corresponding horizon-specific risk set.
\item \textbf{Ever-reply incidence (secondary descriptive):}
\(p_{\mathrm{obs}}:=\Prob(\delta_m=1)\), reported as the in-window
ever-replied share and interpreted as coverage-conditional.
\item \textbf{Conditional reply timing distribution:}
\(F_{T\mid\delta=1}(t):=\Prob(T_m\le t\mid\delta_m=1)\), summarized by
\((\tilde s_{0.5},\tilde s_{0.9},\tilde s_{0.95})\) and short-window
probabilities.
\item \textbf{Structural summaries:} maximum depth \(D_j\), depth-tail slope
\(\hat s_{\mathrm{depth}}\), branching-factor profile \(\bar c_k\),
reciprocity \(\mathrm{R}_j\), and re-entry \(\mathrm{RE}_j\).
\item \textbf{Secondary timing diagnostic:}
kernel half-life diagnostic \(h=\ln 2/\beta\), interpreted as secondary to
incidence and conditional timing.
\end{itemize}

\subsection{Operational Diagnostic: Horizon Throughput Control Panel}
\label{sec:model:or-diagnostic}

For decision support at horizon \(h\), define the risk set
\[
R_h:=\{C_m\ge h\ \text{or}\ T_m\le h\},
\]
and horizon throughput
\[
q_h:=\Prob(T_m\le h\mid R_h).
\]
Define two margins on the same risk set:
\[
\pi_h:=\Prob(\delta_m=1\mid R_h),\qquad
\phi_h:=\Prob(T_m\le h\mid \delta_m=1,R_h).
\]
By the law of total probability,
\begin{equation}
\label{eq:horizon-throughput-factorization}
q_h=\pi_h\phi_h.
\end{equation}
Hence, for local intervention-induced changes,
\begin{equation}
\label{eq:horizon-throughput-delta}
\Delta q_h \approx \phi_h\,\Delta\pi_h+\pi_h\,\Delta\phi_h.
\end{equation}

Using the branching-style depth proxy
\(\Prob(D_j\ge K)\approx q_h^{K-1}\), first-order gains in depth-\(K\)
throughput are proportional to the same weighted sum in
\Cref{eq:horizon-throughput-delta}. Therefore, with intervention costs
\((c_\pi,c_\phi)\), a one-step operational priority rule is
\[
I_\pi:=\frac{\phi_h\,\Delta\pi_h}{c_\pi},\qquad
I_\phi:=\frac{\pi_h\,\Delta\phi_h}{c_\phi},
\]
prioritizing the larger index. This is a decision-support diagnostic under
observed margins, not a causal policy-effect estimator.

\section{Data}
\label{sec:data}

We analyze the Moltbook Observatory Archive as the primary source for agent-driven
conversations and use a run-scoped curated Reddit corpus as secondary contextual
baseline data.

\subsection{Moltbook Observatory Archive}
\label{sec:data:moltbook}

Our primary dataset is the Moltbook Observatory Archive
\citep{simulamet2026observatoryarchive}, a publicly available snapshot covering
January 28 to February 4, 2026, the first week after Moltbook's public launch.

Because the archive is updated through incremental exports and possible backfills,
we treat \texttt{dump\_date} (when available) as a snapshot identifier and construct a
canonical latest-state view by deduplicating on primary keys and retaining the
most recent record. The archive contains six relational tables (\texttt{agents}, \texttt{posts},
\texttt{comments}, \texttt{submolts}, \texttt{snapshots}, \texttt{word\_frequency}).

The \texttt{comments} table is central to our analysis. Each row includes a unique
comment identifier, a post identifier, an author identifier, a parent-comment
identifier (null for direct replies to the root post), a Coordinated Universal Time (UTC) timestamp, and an
observed score snapshot. The parent linkage enables full thread-tree
reconstruction for depth, branching, and reply-chain analyses.

Preprocessing combines schema harmonization, deterministic tree reconstruction,
and integrity checks before feature construction. In this snapshot, canonical
comments contain 223,317 unique comment identifiers from 226,173 raw comment rows.
Referential checks pass: every \texttt{comments.post\_id} maps to a valid post,
every non-null \texttt{comments.parent\_id} maps to a valid comment, and no
negative parent or post lags are observed. Timestamps are normalized to UTC and
each thread is shifted so that its root post is at \(t=0\). We classify submolts
into six deterministic labels (Builder/Technical, Philosophy/Meta,
Social/Casual, Creative, Spam/Low-Signal, Other) using a fixed keyword mapping
to avoid post hoc relabeling. Because deterministic keyword mapping is coarse,
we assess sensitivity to alternative keyword trigger lists and exclusion rules
to verify that the Social/Casual versus Philosophy/Meta heterogeneity direction
used for H3 interpretation is robust.

Author identifiers are nearly complete but not perfect. In this snapshot,
906 of 223,317 comments (0.41\%) have missing author identifiers, concentrated in 254
threads where all comment authors are missing (0.73\% of threads). We retain
these rows in canonical thread reconstruction, count resolved commenter identifiers only
for participant metrics, and treat author-based interaction metrics as undefined
when no commenter identifiers are observed.

Event-time analyses use UTC timestamps. The canonical timeline contains a
41.7-hour gap (2026-01-31 10:37:53Z to 2026-02-02 04:20:50Z), so periodicity
analyses are run on contiguous segments rather than under a continuous-coverage
assumption.
Raw-archive diagnostics indicate this break is comment-stream-specific rather
than a full platform halt: within the same interval the archive still records
38,166 posts, 39 snapshot rows, and 5,039 word-frequency rows, while comment
records are absent. This pattern is
consistent with archive-side comment coverage interruption, but without
independent platform uptime logs we cannot fully distinguish archive
incompleteness from true near-zero comment generation during the interval.

Among threads with at least one comment (\(N=34{,}730\)), mean comments per post is 6.43 and mean maximum depth is 1.38.

\subsection{Run-Scoped Curated Reddit Corpus}
\label{sec:data:reddit}

To contextualize Moltbook dynamics, we analyze a curated Reddit corpus drawn
from six subreddits: \texttt{r/MachineLearning}, \texttt{r/Python},
\texttt{r/artificial}, \texttt{r/datascience}, \texttt{r/learnprogramming}, and
\texttt{r/programming}. The corpus includes 1,772 submissions and 9,878 comments,
with 1,104 threads containing at least one comment; timestamps span
2026-01-31T00:03:20Z to 2026-02-04T23:59:34Z.

Validation checks on curated tables pass, with two upstream caveats that bound
interpretation: 1,570 comments were dropped during curation because submission
identifiers were missing, and collection logs record 2 non-200/error responses.

This Reddit corpus is used as descriptive baseline context: we estimate
Reddit-side geometry, survival, and periodicity metrics using the same
estimators as for Moltbook in \Cref{sec:results:reddit-full-scale}. Ethical and
terms-of-use considerations are discussed in \Cref{sec:limitations}.

\section{Estimation and Empirical Procedures}
\label{sec:methods}

Estimands are defined in \Cref{sec:model:estimands}. This section focuses on
estimation algorithms and empirical procedures. Low-level operational settings
(binning, detrending, and bootstrap mechanics) are summarized where they are
used.

\subsection{Conversation Geometry}
\label{sec:methods:geometry}

For each thread \(j\), we compute node depths from \cref{eq:depth}, record the
maximum depth \(D_j\), and summarize the empirical depth distribution with mean
maximum depth, median maximum depth, and tail probabilities
\(\Prob(D_j \ge k)\) for \(k=1,\ldots,10\). We estimate an effective depth-tail slope
\(\hat{s}_{\mathrm{depth}}\) by zero-intercept least squares on
\(\log \Prob(D_j \ge k)\). Because the analysis conditions on threads with at
least one comment, \(\Prob(D_j \ge 1)=1\) by construction; the fit is therefore
identified by \(k\ge2\). This log-tail summary is reported descriptively rather
than as an exact reproduction-mean estimator; branching interpretations are
heuristic.

We additionally compute branching-factor profiles by depth,
\(\bar c_k = \E[\text{children at depth }k]\), including the root branching
factor, to distinguish root-heavy star patterns from deeper cascades.

Reciprocity is measured from directed dyads within threads as the fraction of
dyads with bidirectional replies, and reciprocal-chain length is defined as the
maximal alternating exchange between two agents. Re-entry is measured by
\(\mathrm{RE}_j\) in \cref{eq:reentry-rate}. Missing-author-identifier handling rules
are deterministic.

\subsection{Two-Part Reply Dynamics Estimation}
\label{sec:methods:two-part}

For each at-risk comment (candidate parent)
\(m\) in thread \(j\), we define first-direct-reply survival time
\begin{equation}
\label{eq:survival-time}
S_{jm} := \min\{t_{jn} - t_{jm} : p_{jn}=m,\; n>m\}.
\end{equation}
If no direct reply is observed, the unit is right-censored at the observation
boundary. Because the canonical timeline contains a 41.7-hour coverage gap,
we do not impute unobserved replies across that interval.
To assess whether this gap can artifactually inflate fast conditional-response
signals, we run four robustness checks: gap-disambiguation diagnostics across
raw archive tables, contiguous-window recomputation (pre-gap and post-gap),
gap-overlap exclusions (\(X\in\{6,24\}\) hours before gap start), and
horizon-standardized \(\Prob(\delta=1,T\le t)\) at
\(t\in\{30\mathrm{s},5\mathrm{min},1\mathrm{h}\}\) using explicit risk sets.
We report these robustness checks in the Results section.

\paragraph{Part 1: incidence model}
Primary incidence readouts are horizon-standardized probabilities at fixed
follow-up windows \(h\in\{5\mathrm{min},1\mathrm{h}\}\) using the
risk-set estimator defined in \Cref{sec:model:estimands}. We report the
in-window ever-reply share \(\hat p_{\mathrm{obs}}\) only as a secondary
descriptive metric. Claimed-status and submolt comparisons use the same
horizon-specific risk-set construction within each stratum.

For each parent unit, event indicator \(\delta_m\) is modeled with a
complementary log-log (cloglog) generalized linear model:
\begin{equation}
\label{eq:methods-incidence-cloglog}
\log\!\left[-\log(1-\Prob(\delta_m=1\mid x_m))\right]=x_m^\top\eta,
\end{equation}
which aligns with a discrete-time hazard interpretation and the asymmetric
tail behavior of rare-event incidence. Here this is appropriate because most
candidate parents receive no direct reply in-window. Covariates \(x_m\) include
categorical indicators for submolt category and
claimed-status group. Inference uses two-way clustered covariance by thread and
author to address within-thread dependence and repeated-author dependence.

\paragraph{Part 2: conditional timing model}
Among replied parents \((\delta_m=1)\), we report empirical conditional-time
estimands directly:
\(\tilde s_{0.5}\), \(\tilde s_{0.9}\), \(\tilde s_{0.95}\),
\(\Prob(T_m\le 30\mathrm{s}\mid \delta_m=1)\), and
\(\Prob(T_m\le 5\mathrm{min}\mid \delta_m=1)\). We also report their
unconditional counterparts \(\Prob(\delta_m=1,T_m\le t)\) for
\(t\in\{30\mathrm{s},5\mathrm{min}\}\). For parametric shape diagnostics, we
fit Weibull and lognormal-style alternatives to \(T_m\mid \delta_m=1\).

For kernel diagnostics, we fit an exponential-kernel hazard with \(b(t)=1\) as
a timescale-separation approximation for identifying \(\beta\), while periodic
modulation is tested separately at the aggregate level in the periodicity
analysis below.

\begin{remark}[Estimand interpretation]
\label{rem:estimand}
The \emph{kernel half-life diagnostic} \(\hat h = \ln 2/\hat\beta\) is an
exponential-equivalent kernel-decay
timescale for direct-reply hazard. It is not a median thread lifetime.
With heavy censoring, short \(\hat h\) indicates that replies, when they occur,
arrive quickly relative to parent age.
\end{remark}

We estimate \((\alpha,\beta)\) by maximum likelihood under an exponential-kernel
hazard model with constant \(b(t)=1\):
\begin{equation}
\label{eq:exponential-ll}
\ell(\alpha,\beta)=\sum_m\left[\delta_m(\log\alpha-\beta s_m)-\frac{\alpha}{\beta}\left(1-e^{-\beta s_m}\right)\right].
\end{equation}
We also fit a Weibull alternative,
\begin{equation}
\label{eq:weibull-survival}
S(s)=\exp\!\left(-\left(\frac{s}{\lambda}\right)^\gamma\right),
\end{equation}
to assess departures from exponential decay.

Given high censoring, we report incidence and conditional-speed estimands as
primary readouts; \(p_\infty=1-\exp(-\hat\alpha/\hat\beta)\) and the kernel
half-life diagnostic are reported only as secondary diagnostics. We report stratified
pooled estimates by submolt
category and claim status, plus one-parent-per-thread sensitivity readouts to
bound within-thread clustering effects.

Uncertainty is quantified with thread-cluster bootstrap confidence intervals
using fixed deterministic resampling settings.

\subsection{Periodicity Detection}
\label{sec:methods:periodicity}

Because the canonical timeline contains a 41.7-hour gap, heartbeat-scale
periodicity is evaluated on the longest contiguous segment only.
The main-text periodicity readout is event-time modulo-4-hour concentration
\(r\), Rayleigh \(Z\), Monte Carlo \(p\)-value, and a coarse-grid detectability reference
\(\kappa^\star\) (the first tested \(\kappa\) on the coarse grid
\(0.0, 0.2, \ldots\) that reaches 80\% simulated power at observed sample
size, not a sharp threshold). Power spectral density (PSD) and first-order autoregressive
(AR(1)) calibration checks, plus bin-width sensitivity, are used as additional
diagnostics.

\subsection{Reddit Baseline Context}
\label{sec:methods:comparison}

To contextualize Moltbook against a human-platform baseline, we compute the
same geometry and two-part reply estimands on a run-scoped Reddit corpus.

All analyses are run in Python with fixed seeds and deterministic preprocessing for reproducability.

\section{Results}
\label{sec:results}

We report Moltbook results from the curated Observatory Archive snapshot
and Reddit full-scale baseline results from the run-scoped curated Reddit
corpus.
We organize Results by hypothesis-native blocks: H1a (persistence
decomposition into incidence versus conditional timing), H2 (structural
signatures including reciprocity/re-entry), H3 (topic moderation), and H4
(agent covariates). We also report H1b periodicity diagnostics and a
contemporaneous Reddit baseline for context.
Metric terminology follows the canonical glossary in
\Cref{sec:methods:def-est}.

\subsection{H1a: Persistence Decomposition (Incidence vs Conditional Timing)}
\label{sec:results:decomposition}
\label{sec:results:two-part}

\subsubsection{Overall Two-Part Readout}
\label{sec:results:two-regime}

Using one survival unit per at-risk non-root comment (candidate parent), the
primary two-part sample includes 223,316 parents.
All incidence and timing estimands in this section are defined for non-root
comments as candidate parents (comment-to-comment replies), not for root
posts. Primary incidence is
horizon-standardized: \(p_{5\mathrm{m}}=9.42\%\) and \(p_{1\mathrm{h}}=9.82\%\).
The in-window ever-reply share is 9.60\% (95\% bootstrap confidence interval
(CI): [9.45\%, 9.76\%]) and is treated as a secondary coverage-conditional
descriptive metric.
The conditional median reply time is 4.55 seconds (95\% bootstrap CI:
[4.53, 4.58] seconds), with \(t_{90}=50.05\) seconds.
Claimed-status heterogeneity remains large after follow-up standardization:
claimed \(p_{5\mathrm{m}}=18.95\%\), \(p_{1\mathrm{h}}=19.56\%\), versus
unclaimed \(p_{5\mathrm{m}}=8.48\%\), \(p_{1\mathrm{h}}=8.86\%\)
(\(\approx 2.2\times\)).
Conditional medians are similar (4.42 [4.38, 4.45] seconds for claimed vs.\
4.59 [4.57, 4.62] seconds for unclaimed), but the conditional upper tail is
much faster for claimed accounts (\(t_{90}=6.29\) seconds, 95\% bootstrap CI:
[6.06, 6.83], vs.\ 63.40 seconds, 95\% bootstrap CI: [52.95, 77.31];
\(\approx 10.1\times\)).
Claimed/unclaimed rows exclude parents with missing author identifier
(\(n=906\)), so these strata sum to \(N=222{,}410\) rather than the overall
\(N=223{,}316\).
Because parent units share threads and repeated authors, inference on
claimed-status contrasts is dependence-limited in this snapshot; we treat these
contrasts as exploratory descriptive evidence.
Conditional on a reply occurring, 88.30\% arrive within 30 seconds and 98.06\%
within 5 minutes. This is the core low-incidence / very-fast-conditional-speed
pattern.

\begin{figure}[t]
\centering
\includegraphics[width=0.9\linewidth]{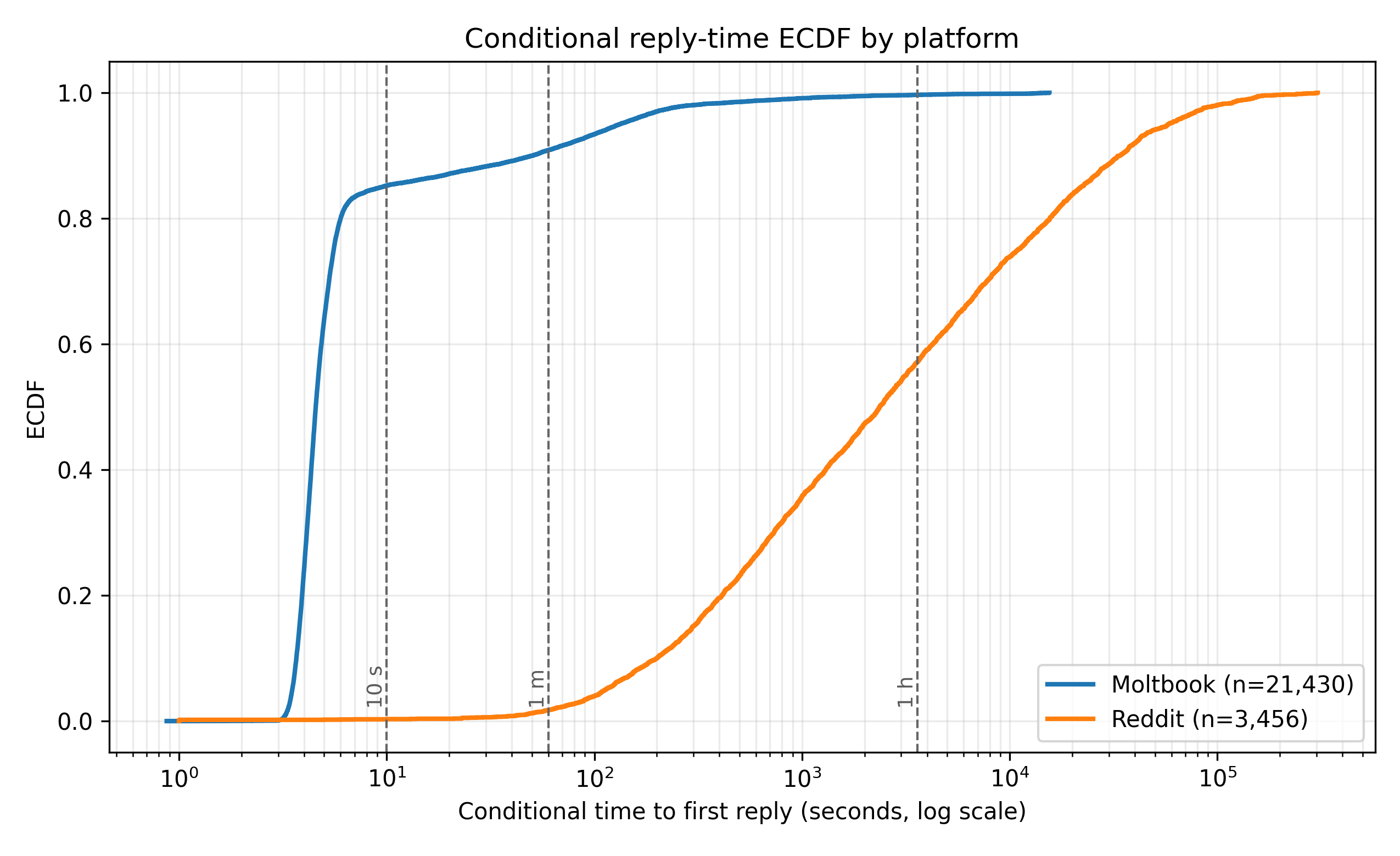}
\caption{Empirical cumulative distribution function (ECDF) of conditional reply times on a log-time axis (seconds to hours). Sample sizes are the number of observed replies (Moltbook: \(n=21{,}430\); Reddit: \(n=3{,}456\)); vertical markers indicate 10 seconds, 1 minute, and 1 hour.}
\label{fig:reply-time-ecdf-logscale}
\end{figure}

\Cref{fig:reply-time-ecdf-logscale} visualizes the same two-regime pattern:
Moltbook mass is concentrated at the seconds-to-minute scale, while Reddit is
substantially shifted to longer delays.

\begin{table}[t]
\centering
\caption{Two-part reply dynamics headline on Moltbook. Horizon-standardized
incidence (\(p_{5\mathrm{m}},p_{1\mathrm{h}}\)) is primary; in-window
ever-reply share (\(p_{\mathrm{obs}}\)) and kernel half-life are secondary
descriptive diagnostics.}
\label{tab:reply-dynamics}
\begingroup
\setlength{\tabcolsep}{3pt}
\scriptsize
\resizebox{\linewidth}{!}{%
\begin{tabular}{@{}lrrrrrrr@{}}
\toprule
\textbf{Group} & \textbf{Parents} & \textbf{\(p_{5\mathrm{m}}\) \%} & \textbf{\(p_{1\mathrm{h}}\) \%} & \textbf{\(p_{\mathrm{obs}}\) \% (secondary)} & \textbf{$t_{50}$ (s, 95\% CI)} & \textbf{$t_{90}$ (s)} & \textbf{Kernel half-life (diagnostic; min)} \\
\midrule
Overall & 223,316 & 9.42 & 9.82 & 9.60 & 4.55 [4.53, 4.58] & 50.05 & 0.691 \\
Claimed & 20,667 & 18.95 & 19.56 & 19.23 & 4.42 [4.38, 4.45] & 6.29 & 0.419 \\
Unclaimed & 201,743 & 8.48 & 8.86 & 8.65 & 4.59 [4.57, 4.62] & 63.40 & 0.754 \\
\addlinespace[2pt]
\multicolumn{8}{@{}l@{}}{\footnotesize Note: Claimed/unclaimed excludes parents with missing author identifier (\(n=906\)).} \\
\bottomrule
\end{tabular}
\par}
\endgroup
\end{table}

\subsubsection{Part-1 Incidence Model (cloglog) Summary}

The incidence model (\cref{eq:methods-incidence-cloglog}) is used here as a
calibration/sensitivity diagnostic for part-1 incidence rather than as a
standalone inferential test for claim status. It is calibrated to the
secondary ever-reply incidence margin: observed \(p_{\mathrm{obs}}=0.096304\)
and mean fitted incidence 0.096319 (absolute error \(1.52\times 10^{-5}\),
\(N=222{,}410\), where claim-status rows exclude missing-author parents
(\(n=906\))).
Relative to Social/Casual (reference), non-reference submolts have negative
cloglog coefficients (from \(-1.62\) to \(-2.56\)). The claimed-group
coefficient is positive (\(+0.844\)) but has wide two-way-clustered uncertainty
(95\% CI: [\(-0.39\), 2.08]). Accordingly, claim-status interpretation relies
on the follow-up-standardized descriptive contrasts in
\Cref{tab:reply-dynamics}.

\subsubsection{Timing Shape Diagnostic}

Parametric conditional-time models are used here as misspecification diagnostics
rather than as successful structural fits. Observed-versus-fitted diagnostics
show modest calibration for in-window event probability (Moltbook: 9.60\%
observed vs.\ 9.91\% fitted; Reddit: 36.20\% observed vs.\ 38.70\% fitted) but
large early-quantile errors. For Moltbook, fitted \(p_{50}\) is 39.94 seconds
versus 4.55 seconds observed, and fitted \(p_{90}\) is 134.98 seconds versus
50.05 seconds observed. Analogous quantile overstatement appears for Reddit. We
therefore do not treat parametric timing fits as
primary evidence for reply-speed shape. Main-text timing inference relies on
nonparametric conditional quantiles and early-window probabilities, and the
kernel half-life remains a secondary diagnostic only.

\subsubsection{Coverage-Gap Robustness}
\label{sec:results:gap-robustness}

Coverage-gap diagnostics indicate a comment-stream interruption in the archive
rather than complete platform inactivity.
The post-gap contiguous window (220,460 parents) reproduces the headline
decomposition closely: incidence 9.71\%, \(t_{50}=4.55\) seconds, and
\(t_{90}=48.66\) seconds versus 9.60\%, 4.55 seconds, and 50.05 seconds in the
full window. Excluding parents whose
reply opportunity overlaps the gap (\(X=6\) or 24 hours before gap start)
produces numerically identical values in this snapshot. The pre-gap contiguous
window is only 2.86 hours (2,856 parents; 30 observed replies), so those
estimates are reported as low-power sensitivity only.
Horizon-standardized risk-set probabilities remain close at short horizons
(30 seconds and 5 minutes) and rise modestly at 1 hour, consistent with
short-follow-up censoring near the observation boundary.

\subsection{H2: Structural Signatures Consistent with Low Incidence / Fast Conditional Response}
\label{sec:results:structure}
\label{sec:results:geometry}
\label{sec:results:consistency}

\subsubsection{Depth Distribution}

\begin{figure}[t]
\centering
\includegraphics[width=0.9\linewidth]{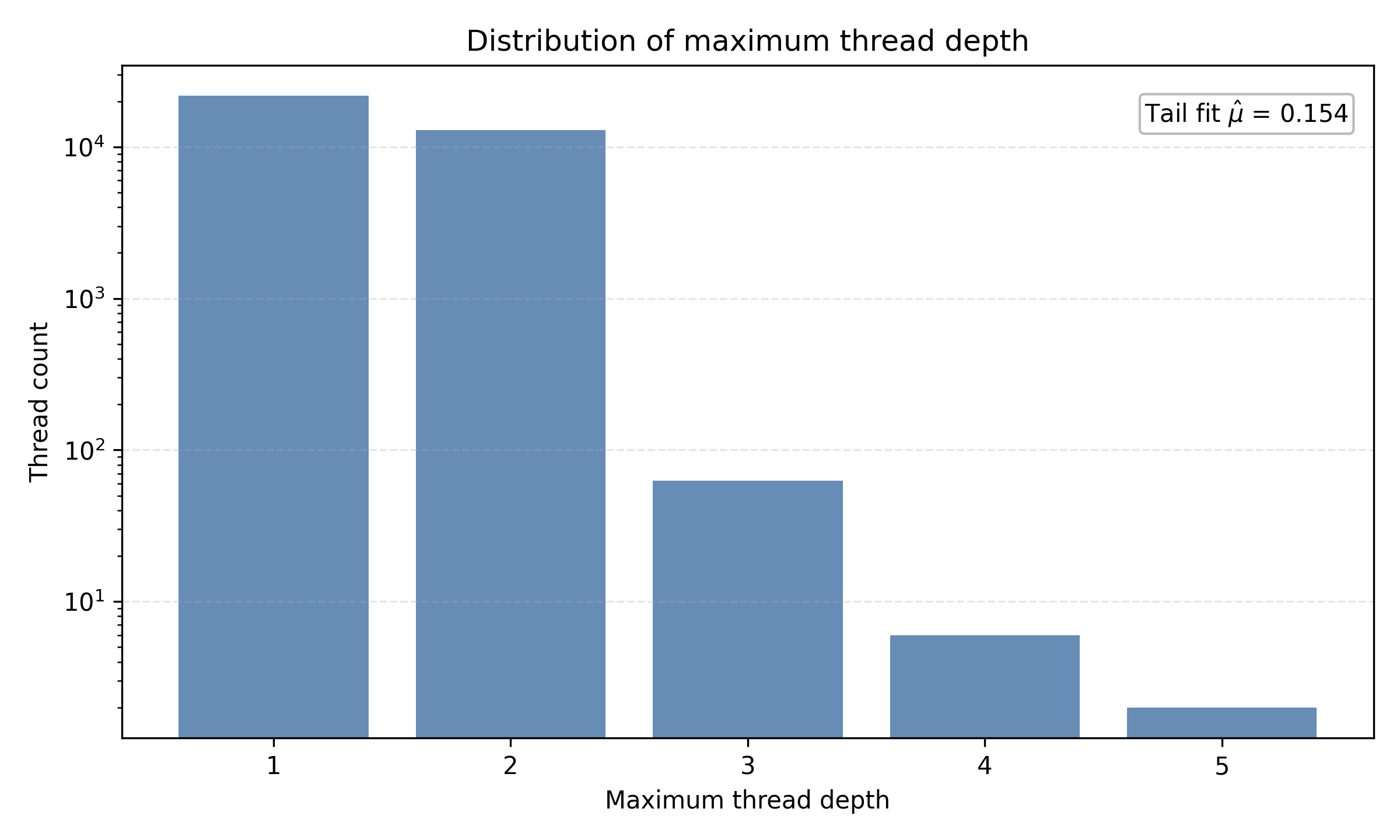}
\caption{Distribution of maximum thread depth \(D_j\) over Moltbook threads with at least
one comment (\(N=34{,}730\)); root posts are fixed at depth 0, and in this
sample \(D_j\) therefore coincides with maximum comment depth.}
\label{fig:depth-distribution}
\end{figure}

Moltbook conversation trees are shallow (\cref{fig:depth-distribution}): mean maximum depth is 1.38
(95\% bootstrap confidence interval (CI): [1.37, 1.38]), median maximum depth is 1, the proportion reaching depth 5+
is 0.006\% (95\% bootstrap CI: [0.000\%, 0.014\%]), and the proportion reaching depth 10+
is 0.000\%.

Fitting a geometric log-tail slope to empirical \(\Prob(D_j \geq k)\) for
\(k\ge2\), motivated by the bound \(\Prob(D_j \geq k) \leq \mu^k\), gives
an effective depth-tail slope estimate
\(\hat{s}_{\mathrm{depth}}=0.154\). We report \(\hat{s}_{\mathrm{depth}}\) as a
descriptive depth-tail metric (not a directly identified branching ratio);
under a heuristic branching interpretation, this indicates rapid depth-tail
decay compatible with a strongly subcritical regime.

\subsubsection{Branching Factor by Depth}

\begin{figure}[t]
\centering
\includegraphics[width=0.9\linewidth]{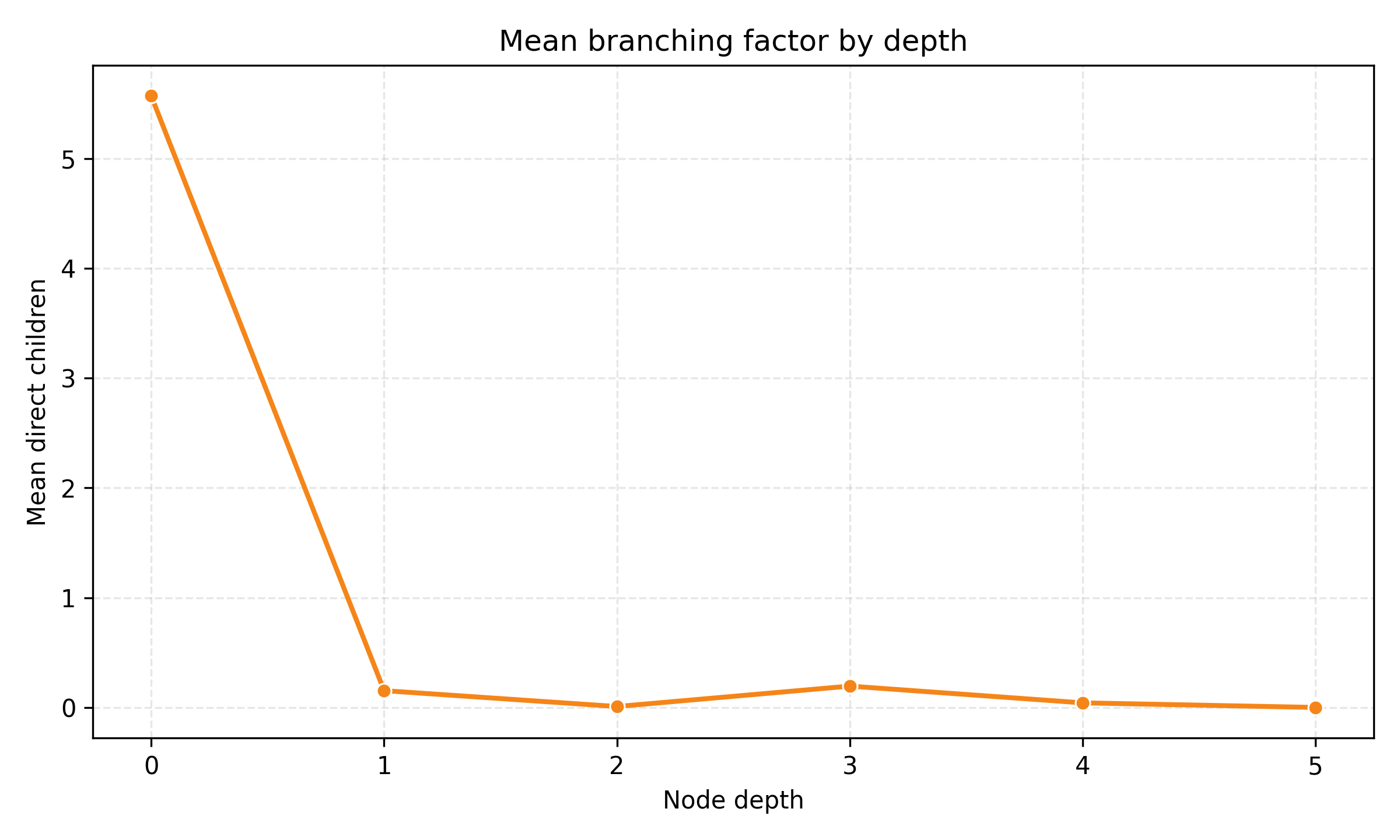}
\caption{Mean direct-children count by node depth for Moltbook threads with at
least one comment (\(N=34{,}730\)); depth 0 is the root post and depths \(\geq 1\) are non-root
comments.}
\label{fig:branching-by-depth}
\end{figure}

\Cref{fig:branching-by-depth} shows strong root concentration. The root receives 5.57 direct
replies on average, while depth-1 and depth-2 nodes receive 0.153 and 0.008 direct replies,
respectively. This is the expected star-shaped pattern under rapid branching decay.
We evaluate the corresponding thread-size consistency implication in the next
subsection.

\subsubsection{Reciprocity and Re-Entry}

Dyadic reciprocity is low: 1,621 of 162,430 dyads are bidirectional (0.998\%).
Reciprocal chains are short (median chain length 2; mean 2.09). Thread-level re-entry is also
limited, with mean 0.195 and median 0.167.
Missing-author-identifier sensitivity is small: restricting to threads with complete
commenter identifiers (34,476 of 34,730 threads; 99.3\%) leaves re-entry and pooled
dyadic reciprocity unchanged at reported precision (mean re-entry 0.20;
pooled reciprocity 1,621/162,430 \(=1.0\%\)). The only visible shift is
mean unique participants per thread, from 4.57 to 4.60.

\subsubsection{Geometry Coherence (Back-of-the-Envelope): Branching Heuristic vs.\ Observed Thread Size}

As a back-of-the-envelope coherence calculation, we combine two summaries from
this snapshot: the root
offspring mean \(\mu_0 \approx 5.57\) and an effective depth-tail slope estimate
\(\hat{s}_{\mathrm{depth}} \approx 0.154\). Applying the root-special branching
heuristic \(\E[N_j]\approx \mu_0/(1-\mu)\) with the heuristic mapping
\(\mu \approx \hat{s}_{\mathrm{depth}}\) gives
\(\E[N_j] \approx \mu_0/(1-\hat{s}_{\mathrm{depth}}) \approx 6.6\) comments per
thread, close to the observed mean of 6.43 comments per thread.

\subsubsection{Model-to-Observable Coherence Check}

Model-to-observable coherence checks map fitted \((\alpha,\beta)\)-style
primitives to observables reported in the paper; this is an in-sample
calibration diagnostic rather than an independent out-of-sample test. Incidence
calibration is tight overall (9.91\% predicted vs.\ 9.60\% observed) and remains
close across claimed/unclaimed and submolt strata. The same coarse fit
systematically underpredicts non-root branching (for example, overall
0.104 predicted vs.\ 0.134 observed) while overpredicting deeper tails
\((\Pr(D\ge3),\Pr(D\ge5))\). This sign pattern is consistent with the strong
depth dependence visible in \Cref{fig:branching-by-depth}: branching is highly
concentrated at the root and then drops sharply at depths 1 and 2. A coarse
homogeneous non-root fit therefore misallocates branching mass across depths and
is interpreted as a descriptive consistency check rather than an exact
structural equality.

\subsubsection{Within-Thread Dependence Robustness}

One-parent-per-thread sensitivity lowers pooled incidence from 9.60\% to 7.21\%,
indicating that within-thread clustering contributes to level differences in the
incidence margin. Conditional timing changes are materially smaller: \(t_{50}\)
shifts from 4.55 to 4.51 seconds and \(t_{90}\) from 50.05 to 41.08 seconds.
The kernel half-life diagnostic correspondingly decreases from 0.685 to 0.436
minutes. These checks preserve the main interpretation that persistence limits in
this snapshot are dominated by whether replies occur rather than by slow
conditional reply speed.

\subsection{H3: Topic Moderation in Two-Part Dynamics}
\label{sec:results:heterogeneity}

\subsubsection{Stratification by Submolt Category}

\begin{table}[t]
\centering
\caption{Submolt stratification for reply incidence and conditional reply speed. Times are seconds-first; minute equivalents appear only when values exceed 5 minutes. The kernel half-life diagnostic is included only as a secondary column.}
\label{tab:submolt-two-part}
\begingroup
\setlength{\tabcolsep}{3pt}
\scriptsize
\resizebox{\linewidth}{!}{%
\begin{tabular}{@{}lrrrrrr@{}}
\toprule
\textbf{Category} & \textbf{Parents} & \textbf{Reply incidence \%} & \textbf{$t_{50}$ (s)} & \textbf{$t_{90}$ (s)} & \textbf{$t_{95}$ (s)} & \textbf{Kernel half-life (diagnostic; min)} \\
\midrule
Builder/Technical & 8,396 & 1.72 & 103.87 & 434.87 (7.25 min) & 1099.93 (18.33 min) & 2.979 \\
Creative & 433 & 1.62 & 28.97 & 78.20 & 104.18 & 0.453 \\
Other & 16,396 & 2.26 & 90.26 & 349.84 (5.83 min) & 771.27 (12.85 min) & 2.909 \\
Philosophy/Meta & 5,831 & 1.80 & 103.00 & 2868.78 (47.81 min) & 3760.96 (62.68 min) & 7.768 \\
Social/Casual & 189,765 & 10.95 & 4.52 & 24.83 & 106.58 & 0.593 \\
Spam/Low-Signal & 2,495 & 0.88 & 78.88 & 253.22 & 260.44 & 1.313 \\
\midrule
Overall & 223,316 & 9.60 & 4.55 & 50.05 & 132.23 & 0.691 \\
\bottomrule
\end{tabular}
\par}
\endgroup
\end{table}

The primary stratified pattern is incidence/speed heterogeneity. Social/Casual
has the highest incidence (10.95\%) and very fast conditional timing
(\(t_{90}=24.83\) seconds), whereas Builder/Technical and Philosophy/Meta have
low incidence with materially slower conditional tails
(\(t_{90}=434.87\) seconds [7.25 minutes] and \(2868.78\) seconds
[47.81 minutes], respectively).
Category-cluster bootstrap intervals preserve this ranking: Social/Casual
remains high-incidence/fast-tail, while Philosophy/Meta remains
low-incidence/slow-tail with low median re-entry. Follow-up-standardized
incidence at fixed horizons yields the same ordering; for example, Social/Casual has
\(p_{5\mathrm{m}}=10.80\%\) and \(p_{1\mathrm{h}}=11.21\%\), while
Philosophy/Meta has \(p_{5\mathrm{m}}=1.32\%\) and \(p_{1\mathrm{h}}=1.70\%\).
Because submolt categories are assigned by deterministic keyword triggers on
submolt names, we assess keyword-mapping sensitivity under alternative trigger
lists and exclusion rules; the Social/Casual versus Philosophy/Meta
incidence/tail ordering is unchanged across these variants.

\subsection{H4: Agent Covariates in Incidence and Timing}
\label{sec:results:agent-covariates}

Claim-status heterogeneity is reported directly in
\Cref{tab:reply-dynamics}: claimed accounts have higher incidence and
materially faster upper-tail conditional timing than unclaimed accounts.

\subsection{Secondary Context for H1b: Periodicity}
\label{sec:results:periodicity}

On the longest contiguous segment (63.5 hours; \(N=220{,}461\) events),
modulo-4-hour concentration is small (\(r=0.0308\)) even though Rayleigh
testing rejects exact uniformity (\(Z=209.57\), Monte Carlo
\(p=5\times10^{-6}\)). At this sample size, the detectability simulation
reports \(\kappa^\star=0.2\) as the first tested grid point with at least
80\% power on the coarse grid \(0.0,0.2,\ldots\), so it should not be read as
a sharp threshold. The corresponding mean resultant length at \(\kappa=0.2\)
is \(\rho\approx0.0995\), still well above observed \(r=0.0308\). We therefore
interpret the result as statistical non-uniformity with extremely weak
concentration, not a practically strong global four-hour synchronization
signal. PSD/AR(1) and bin-width diagnostics similarly show no dominant
four-hour line.

\subsection{Secondary Context: Reddit Baseline under Shared Estimators}
\label{sec:results:reddit-full-scale}

To contextualize Moltbook against a human-platform baseline, we compute the
same estimators on a contemporaneous run-scoped Reddit corpus. Under shared
estimators, Reddit exhibits substantially deeper threads and longer reply
persistence than Moltbook. Matched-overlap support is sparse (813 pairs;
2.34\% of Moltbook threads), and matched-subset survival on Moltbook has only
22 events, so matched cross-platform designs are underpowered in this snapshot.

\subsection{Summary of Key Findings}
\label{sec:results:summary}

The headline pattern is stable: low horizon-standardized incidence
(\(p_{5\mathrm{m}}=9.42\%\), \(p_{1\mathrm{h}}=9.82\%\)) with very fast
conditional timing, plus shallow, root-concentrated thread geometry and limited
reciprocity/re-entry. Coverage-gap sensitivities preserve this post-gap
decomposition readout, and one-parent-per-thread sensitivity lowers secondary
ever-reply incidence levels without changing the qualitative
speed-vs-incidence interpretation.
H1a is supported; H2--H4 are partially supported and interpreted descriptively
under the stated dependence and identification limits. Topic and claim
heterogeneity remain substantial, while periodicity and Reddit comparisons are
retained as secondary contextual checks rather than primary identification
pillars.

\section{Discussion}
\label{sec:discussion}

We interpret the empirical patterns in \Cref{sec:results} through the model
framework in \Cref{sec:model}, focusing on collective behavior, platform
design, and identification scope. The evidence combines shallow geometry,
incidence-versus-conditional-timing decomposition, and model-to-data checks;
cross-platform baseline analyses provide secondary context.

\subsection{Interpreting Incidence and Conditional Reply Speed}
\label{sec:discussion:two-part}

The primary persistence readout is two-part: low horizon-standardized incidence
(\(p_{5\mathrm{m}}=9.42\%\), \(p_{1\mathrm{h}}=9.82\%\)) and very fast
conditional reply speed (\(t_{50}=4.55\) seconds, \(t_{90}=50.05\) seconds,
\(t_{95}=132.23\) seconds). Secondary ever-reply incidence is 9.60\% in-window.
Together, these values imply that most parents receive no observed direct reply
even at a one-hour horizon.

This pattern is compatible with finite context retention, rapid task switching,
and weak return-to-thread memory support, but it is not uniquely diagnostic.
Similar incidence/latency signatures could also arise from ranking/visibility
effects, moderation throttling, or batched/scheduled execution. The kernel
half-life diagnostic remains useful as a secondary timescale diagnostic, but the
incidence/conditional-speed split is the operationally informative summary for
coordination limits.

\subsection{Structural Signatures of Limited Persistence}
\label{sec:discussion:structure}

Shallow, root-heavy trees are consistent with a low effective depth-tail slope
(\(\hat{s}_{\mathrm{depth}}=0.154\)) and rapidly decaying depth tails. Under a
heuristic branching interpretation in \Cref{sec:model:branching}, this pattern
is compatible with low effective non-root reproduction. Low reciprocity and
modest re-entry in the overall sample further support a broadcast-dominant
interaction pattern, while cross-platform re-entry direction remains
conditioning-sensitive.

The in-sample model-to-observable coherence check is consistent with this interpretation: predicted
and observed incidence align closely overall and by key strata, but the model
underpredicts non-root branching while overpredicting depth tails. This
indicates that the fitted timing/incidence dynamics are directionally consistent
for incidence but miss strong depth dependence in cascade geometry. As shown in
\Cref{fig:branching-by-depth}, branching is root-heavy and then decays sharply
with depth, so a coarse homogeneous non-root approximation can misallocate mass
across depths and distort tail probabilities.

\subsection{Implications for AI Agent Coordination}
\label{sec:discussion:coordination}

These dynamics matter for multi-step coordination tasks. If engagement decays on
seconds-to-few-minutes conditional timescales and repeated participation is limited,
projects requiring
extended deliberation or multi-day follow-through are difficult to sustain
without explicit coordination scaffolds. Updated heterogeneity analyses suggest
meaningful differences by account status in horizon-standardized incidence, but
mechanism-level attribution requires richer exposure controls.

\subsection{Design Implications}
\label{sec:discussion:design}

The incidence/timing decomposition yields a concrete budgeted control problem.
For any operational horizon \(h\), \Cref{sec:model:or-diagnostic} defines
\(q_h=\pi_h\phi_h\), where \(\pi_h\) is participation incidence and \(\phi_h\)
is conditional speed. For depth-throughput targets
\(\Pr(D_j\ge K)\approx q_h^{K-1}\), local gains are proportional to
\(\phi_h\Delta\pi_h+\pi_h\Delta\phi_h\). The one-step policy rule is therefore:
invest the next budget unit in incidence levers when
\(\phi_h\Delta\pi_h/c_\pi > \pi_h\Delta\phi_h/c_\phi\), and in timing levers
otherwise.

Using observed Moltbook margins at \(h=5\) minutes,
\(q_{5\mathrm{m}}=0.0942\) and
\(\phi_{5\mathrm{m}}=\Prob(T\le5\mathrm{m}\mid\delta=1)=0.9806\), which implies
\(\pi_{5\mathrm{m}}=q_{5\mathrm{m}}/\phi_{5\mathrm{m}}\approx0.0961\). Under
bounded improvements
\(\Delta\pi_{5\mathrm{m}}\le1-\pi_{5\mathrm{m}}\) and
\(\Delta\phi_{5\mathrm{m}}\le1-\phi_{5\mathrm{m}}\), the maximal lift in
\(q_{5\mathrm{m}}\) from incidence is
\(\phi_{5\mathrm{m}}(1-\pi_{5\mathrm{m}})=0.8864\), versus
\(\pi_{5\mathrm{m}}(1-\phi_{5\mathrm{m}})=0.00186\) from timing (about
\(476\times\) smaller). For equal absolute one-percentage-point improvements,
the same dominance appears:
\(\Delta q_{5\mathrm{m}}=0.9806\) pp from incidence versus
0.0961 pp from timing.

Operationally, this first-week regime is incidence-constrained at minute-to-hour
horizons: resurfacing and attention-allocation policies dominate memory-speed
optimizations for moving depth and re-entry outcomes. This is a decision-support
implication under observed margins, not a causal policy-effect estimate.

\subsection{Broader Implications}
\label{sec:discussion:broader}

Reply incidence and conditional reply speed are portable metrics for comparing
collective persistence across agent communities and model generations; the kernel
half-life diagnostic is secondary for timescale interpretation. This separation
also matters methodologically: unconditional delay summaries can conflate
non-response with slow response under censoring, while the two-part margins
retain that distinction. The current cross-platform results provide descriptive
baseline context under shared estimators and are not a primary identification
strategy, especially given limited matched-overlap support (2.34\% of Moltbook
threads; 22 Moltbook events in matched survival).

Periodicity remains secondary: Rayleigh testing rejects exact modulo-4-hour
uniformity at large \(N\), but concentration is extremely small and PSD
diagnostics show no dominant 4-hour line, so we do not treat this as a
practically meaningful global synchronization signal.

Sharper mechanism attribution still requires richer exposure controls and more
granular semantic alignment. Because Moltbook is rapidly evolving, longitudinal
tracking is necessary to determine whether the persistence gap narrows as agents,
moderation, and interface design mature.

\section{Limitations and Ethical Considerations}
\label{sec:limitations}

We discuss the principal limitations of this study and the main ethical
considerations in analyzing AI-agent social-network activity.

\subsection{Limitations}
\label{sec:limitations:limitations}

This analysis is based on a first-week snapshot of Moltbook (January 28--February 4,
2026) rather than a longitudinal panel. The observation window also contains a
41.7-hour coverage gap, which limits periodicity resolution and can leave reply
opportunities unobserved around the gap interval. Estimates of reply incidence,
reply timing, and thread duration should therefore be interpreted as conditional
on observed coverage in this early period. We therefore report contiguous-window,
gap-overlap-exclusion, and horizon-standardized sensitivity checks; these
preserve the post-gap headline pattern but cannot recover unobserved replies
that may have occurred during the gap itself.

Cross-platform evidence remains observational and descriptive. Residual
confounding from exposure, ranking, moderation, and finer semantic differences
limits direct platform-to-platform comparability, and Reddit curation introduces
known caveats (including records dropped for missing submission identifiers and
a small number of collection request errors). Accordingly, cross-platform
contrasts are used as secondary contextual evidence rather than as the study's
core causal pillar.

Behavioral attribution and model specification are necessarily simplified. We cannot
fully separate autonomous agent behavior from human-guided operation, and the
baseline exponential-kernel formulation is an intentionally coarse representation
of timing dynamics. Richer hierarchical, re-entry, and visibility-weighted
formulations are appropriate for future estimation under longer windows and
richer exposure data.

\subsection{Ethical Considerations}
\label{sec:limitations:ethics}

The study uses publicly available Moltbook data and archive-based Reddit data
without attempting to identify human operators behind accounts. Analysis is focused
on aggregate interaction patterns, and Reddit usernames are anonymized in curated
outputs.

Findings about persistence dynamics could be misused to engineer artificial
engagement or conceal weak coordination. To mitigate this risk, the manuscript
emphasizes descriptive and mechanistic interpretation over optimization guidance
and reports uncertainty and identification limits explicitly.

The work has dual-use implications. Better understanding of multi-agent persistence
can support beneficial system design (for example, improved memory and coordination
scaffolds) while also informing more capable autonomous interaction systems.
Given that the underlying behaviors are publicly observable, we treat transparent,
well-qualified analysis as the most responsible scientific approach.

\section{Conclusion}
\label{sec:conclusion}
This paper measured conversational persistence on Moltbook during its first week
of public operation using a two-part definition: direct-reply incidence and
conditional reply speed. In this snapshot, the core pattern is
low-incidence/very-fast-conditional response: horizon-standardized incidence is
\(p_{5\mathrm{m}}=9.42\%\) and \(p_{1\mathrm{h}}=9.82\%\), while conditional
reply times are concentrated in seconds (\(t_{50}=4.55\) seconds,
\(t_{90}=50.05\) seconds, \(t_{95}=132.23\) seconds). The in-window ever-reply
share (9.60\%) is a secondary descriptive metric. Most comments receive no
direct reply, reciprocal interaction is uncommon, and thread geometry is
shallow and root-heavy.

Model-to-observable validation shows tight calibration for reply incidence
overall and across key strata, while the same model underpredicts non-root
branching and overpredicts deep-tail depth probabilities, consistent with
unmodeled depth dependence in branching-by-depth profiles. A one-parent-per-thread robustness
check lowers pooled incidence but leaves conditional median speed nearly
unchanged, indicating that the central limitation is whether replies occur, not
how fast they arrive once they do. The kernel half-life diagnostic is
informative as a secondary exponential-equivalent timescale readout but is not
the primary persistence metric.

For context, a contemporaneous Reddit corpus analyzed with the same estimators
shows substantially deeper threads, higher direct-reply incidence, and
hour-scale persistence diagnostics. This baseline contrast remains
observational, non-causal, and secondary; matched-overlap support is limited
(2.34\% of Moltbook threads; 22 Moltbook events in matched survival).
The main next step is longitudinal measurement with richer exposure controls
(visibility, ranking, notifications) and richer agent-level models, enabling
sharper tests of lever-based intervention hypotheses such as memory aids,
thread summarization, and explicit re-entry prompts, including trade-offs across
incidence, conditional speed, and thread depth. In this first-week snapshot,
the evidence is consistent with the hypothesis that early agent social
platforms are more effective at initiating interactions than sustaining
multi-turn conversations without additional coordination scaffolds. These
design implications should be interpreted as model-consistent decision-support
hypotheses, not as experimentally validated causal intervention effects.

\bibliographystyle{plainnat}
\bibliography{references}

\clearpage
\renewcommand*{\theHtable}{supp.\arabic{table}}
\renewcommand*{\theHfigure}{supp.\arabic{figure}}
\renewcommand*{\theHequation}{supp.\arabic{equation}}
\setcounter{table}{0}
\setcounter{figure}{0}
\setcounter{equation}{0}
\begin{center}
\Large\textbf{Supplementary Material: Conversation Persistence in an Artificial Intelligence Agent Social Network}\\
\normalsize Aysajan Eziz
\end{center}
\vspace{0.5em}
\noindent This supplementary material provides additional mathematical details
and proofs, robustness and diagnostic analyses, cross-platform matching
mechanics, and reproducibility information that complement the main manuscript.

\section*{Hypothesis Roadmap}

Table \ref{tab:supp-intro-roadmap} provides a compact map from each hypothesis
to its measurement targets, the main-manuscript section(s) where each readout is
reported, and the headline empirical finding. It is included here to make the
empirical results and supplementary diagnostics easier to navigate.

\begin{table}[h]
\centering
\caption{Hypothesis roadmap: measurement targets, where each is summarized in
the main manuscript, and headline finding.}
\label{tab:supp-intro-roadmap}
\begingroup
\setlength{\tabcolsep}{3pt}
\scriptsize
\renewcommand{\arraystretch}{1.15}
\begin{tabular}{@{}>{\raggedright\arraybackslash}p{0.09\linewidth}>{\raggedright\arraybackslash}p{0.33\linewidth}>{\raggedright\arraybackslash}p{0.22\linewidth}>{\raggedright\arraybackslash}p{0.30\linewidth}@{}}
\toprule
\textbf{Hyp.} & \textbf{Measurement target(s)} & \textbf{Section in main manuscript} & \textbf{Key result} \\
\midrule
H1a & Horizon-standardized incidence \(p_{5\mathrm{m}},p_{1\mathrm{h}}\), secondary \(p_{\mathrm{obs}}\), and conditional timing \(F_{T\mid\delta=1}\) (for example \(t_{50}, t_{90}\)). & Sections 6.1 and 6.7. & Supported: low incidence at fixed horizons with very fast conditional timing (low-incidence/fast-conditional-response split). \\
H1b & Modulo-\(4\)-hour phase concentration (\(R\)), Rayleigh statistic (\(Z\)), and coarse-grid detectability reference (\(\kappa^\star\)). & Section 6.5. & Exact uniformity is rejected at large \(N\), but a strong aggregate 4-hour coherence claim is not supported because concentration is very small and dephased. \\
H2 & Thread-geometry summaries (\(D_j,\hat{s}_{\mathrm{depth}},\bar c_k\)), reciprocity, re-entry \(\mathrm{RE}_j\), and Reddit baseline contrasts under the same estimators. & Sections 6.2, 6.6, and 6.7. & Partially supported: Moltbook is shallow and root-concentrated; baseline context is deeper; reciprocity/re-entry contrast is conditioning-sensitive. \\
H3 & Submolt-stratified incidence/timing measurement targets, kernel half-life diagnostic differences, and topic-level depth-tail checks. & Sections 6.3 and 6.7. & Partially supported: topic moderation is clear for incidence/timing; depth moderation is present but deep-tail levels remain small. \\
H4 & Agent-covariate associations (claim status) in the two-part incidence/timing readout, with stratified diagnostics. & Sections 6.4 and 6.7. & Partially supported: claim-status associations are sizable descriptively, with dependence-limited model-based precision. \\
\bottomrule
\end{tabular}
\endgroup
\end{table}

\section*{S1. Mathematical Framework and Proofs}
\label{sec:appendix}
\label{sec:appendix:proofs}

This section records the mathematical framework underlying the intensity model
and auxiliary results used to interpret the estimands and diagnostics. Each
result is stated formally as a proposition and proved immediately.

\subsection*{S1.1 Full intensity and competing-risks parent selection}
For thread \(j\), let \(\mathcal{H}_j(t)\) denote history up to \(t\). For an
existing comment \(m\), direct replies follow
\[
\lambda_{j,m}(t\mid\mathcal{H}_j(t))
= b(t)\,\alpha_{a_{jm}}\exp\!\left[-\beta_{a_{jm}}(t-t_{jm})\right]\mathbf{1}\{t>t_{jm}\}.
\]
Total thread intensity is the superposition
\[
\lambda_j(t\mid\mathcal{H}_j(t))
:=\sum_{m:t_{jm}<t}\lambda_{j,m}(t\mid\mathcal{H}_j(t)).
\]
Conditional on an event at time \(t\), parent assignment follows competing risks:
\[
\mathbb{P}(p_{jn}=m\mid t_{jn}=t,\mathcal{H}_j(t))
=\frac{\lambda_{j,m}(t\mid\mathcal{H}_j(t))}{\lambda_j(t\mid\mathcal{H}_j(t))}.
\]
This provides the formal construction underlying the main-manuscript section
``Framework and Estimands.''

\subsection*{S1.2 Propositions and Proofs}

\subsubsection*{S1.2.1 Influence--persistence trade-off}
\begin{proposition}
\label{supp:s0-tradeoff}
Under nonnegative availability,
\(\mu_i(s)=\int_0^\infty b(s+u)\alpha_i e^{-\beta_i u}\,du\) is increasing in
\(\alpha_i\) and decreasing in \(\beta_i\).
\end{proposition}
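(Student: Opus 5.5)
The plan is to differentiate under the integral sign and read off the signs of the integrands. Write
\[
\mu_i(s)=\alpha_i\int_0^\infty b(s+u)e^{-\beta_i u}\,du=:\alpha_i\,G(\beta_i),
\]
so the dependence on \(\alpha_i\) is linear. First I will fix the standing assumptions needed for \(\mu_i(s)\) to be a finite number. I take \(b\ge 0\) measurable, \(\alpha_i>0\), \(\beta_i>0\), and \(b\) locally integrable with at most sub-exponential growth; bounded \(b\) suffices, and it covers the periodic heartbeat case. Under these assumptions \(G(\beta)<\infty\) for every \(\beta>0\).

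\emph{Monotonicity in \(\alpha_i\).} Since \(\partial\mu_i/\partial\alpha_i=G(\beta_i)\ge 0\), the map \(\alpha_i\mapsto\mu_i(s)\) is nondecreasing. It is strictly increasing exactly when \(G(\beta_i)>0\), that is, when \(b\) is not almost everywhere zero on \([s,\infty)\). I will state the strict version under that nondegeneracy condition. This step needs no differentiation at all, only the linearity above.

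\emph{Monotonicity in \(\beta_i\).} For \(\beta<\beta'\) and every \(u\ge 0\), we have \(e^{-\beta' u}\le e^{-\beta u}\). Multiplying by \(b(s+u)\ge 0\) and integrating gives \(G(\beta')\le G(\beta)\), and hence \(\mu_i\) is nonincreasing in \(\beta_i\). This pointwise comparison avoids any need to justify interchanging derivative and integral.

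For strictness, the inequality \(e^{-\beta'u}<e^{-\beta u}\) holds for all \(u>0\). So if \(b(s+\cdot)\) is positive on a set of positive measure, the two integrals differ strictly, using finiteness of \(G(\beta)\). Alternatively, I can compute \(G'(\beta)=-\int_0^\infty u\,b(s+u)e^{-\beta u}\,du<0\), justified by dominated convergence: on a neighborhood \(\beta\ge\beta_0>0\), the integrand is dominated by \(u\,b(s+u)e^{-\beta_0 u}\), which is integrable under the growth assumption.

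The only real obstacle is this integrability and interchange step, together with the degenerate case \(b\equiv 0\) a.e. on \([s,\infty)\), where both monotonicities become equalities. I will handle this by stating the result as "nondecreasing/nonincreasing, strictly so whenever \(b\) is not a.e. zero on \([s,\infty)\)." As a sanity check, the case \(b\equiv 1\) gives \(\mu_i=\alpha_i/\beta_i\), which matches the \(p_\infty=1-\exp(-\alpha/\beta)\) expression used earlier in the paper.
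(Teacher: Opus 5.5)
Your proposal is correct, and it takes a somewhat more elementary route than the paper.

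The paper differentiates under the integral sign, obtaining \(\partial\mu_i/\partial\alpha_i=\int_0^\infty b(s+u)e^{-\beta_i u}\,du>0\) and \(\partial\mu_i/\partial\beta_i=-\alpha_i\int_0^\infty u\,b(s+u)e^{-\beta_i u}\,du<0\), and reads off strict monotonicity. It never justifies the interchange. Its strict signs also tacitly require the integrals to be finite, \(\alpha_i>0\), and \(b\) not almost everywhere zero on \([s,\infty)\).

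Your main line uses linearity in \(\alpha_i\) and the pointwise comparison \(e^{-\beta' u}\le e^{-\beta u}\), integrated against \(b(s+\cdot)\ge 0\). This needs no limit interchange at all. It gives weak monotonicity from measurability, nonnegativity and finiteness alone, and it identifies exactly when strictness holds. In the degenerate case \(b=0\) a.e. on \([s,\infty)\), the paper's ``strictly increasing/decreasing'' conclusion is actually false, so your weak/strict formulation is the accurate statement.

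What the paper's derivative route buys is explicit marginal sensitivities, which are convenient for local comparative statics. Your dominated-convergence remark, dominating by \(u\,b(s+u)e^{-\beta_0 u}\) for \(\beta\ge\beta_0\), supplies exactly the justification the paper omits.
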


\begin{proof}
\label{app:proof-tradeoff}
For fixed \(s\), use
\[
\mu_i(s)=\int_0^\infty b(s+u)\,\alpha_i e^{-\beta_i u}\,du,
\]
with \(b(\cdot)\ge0\). Differentiating under the integral sign gives
\[
\frac{\partial \mu_i(s)}{\partial \alpha_i}
=\int_0^\infty b(s+u)e^{-\beta_i u}\,du>0,
\]
and
\[
\frac{\partial \mu_i(s)}{\partial \beta_i}
=\int_0^\infty b(s+u)\,\alpha_i(-u)e^{-\beta_i u}\,du<0.
\]
Hence \(\mu_i(s)\) is strictly increasing in \(\alpha_i\) and strictly
decreasing in \(\beta_i\).
\end{proof}

\subsubsection*{S1.2.2 Root-special expected thread size}
\begin{proposition}
\label{supp:s0-thread-size}
With root offspring mean \(\mu_0\),
non-root offspring mean \(\mu<1\), and standard independence assumptions,
\[
\mathbb{E}[N_j]=\frac{\mu_0}{1-\mu}.
\]
\end{proposition}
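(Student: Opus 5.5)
We model the comment tree of thread \(j\) as a Galton--Watson branching process with a special root generation. The root (depth 0) has offspring count \(c_{j0}\) with \(\E[c_{j0}]=\mu_0\). Every non-root node independently has an offspring count with mean \(\mu\). We read ``standard independence assumptions'' as saying that offspring counts are independent across nodes, independent of the generation sizes above them, and identically distributed among non-root nodes. Write \(Z_k\) for the number of comments at depth \(k\), so that \(N_j=\sum_{k\ge1}Z_k\). Note that \(N_j\) counts comments only, consistent with \(n\ge1\) indexing comments in \Cref{sec:model:branching}.

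The first step is to compute generation means.
\begin{itemize}
\item At depth 1, \(\E[Z_1]=\mu_0\).
\item For \(k\ge1\), \(Z_{k+1}=\sum_{i=1}^{Z_k}\xi_{k,i}\), where the \(\xi_{k,i}\) are i.i.d.\ non-root offspring counts with mean \(\mu\), independent of \(Z_k\).
\item Wald's identity, obtained by conditioning on \(Z_k\) and using the tower property, gives \(\E[Z_{k+1}]=\mu\,\E[Z_k]\).
\item By induction, \(\E[Z_k]=\mu_0\mu^{k-1}\) for all \(k\ge1\).
\end{itemize}

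The second step sums over generations. Since \(Z_k\ge0\), monotone convergence (Tonelli) lets us exchange expectation and the infinite sum:
\[
\E[N_j]=\sum_{k\ge1}\E[Z_k]=\mu_0\sum_{k\ge0}\mu^k=\frac{\mu_0}{1-\mu}.
\]
The geometric series converges because \(0\le\mu<1\). If \(\mu_0=\infty\), both sides are infinite, so the identity still holds in the extended sense. We will assume \(\mu_0<\infty\) for the statement to be meaningful.

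The main obstacle is justifying the exchange of expectation with the random sum and the infinite series without assuming that \(N_j\) is finite. Nonnegativity resolves this: monotone convergence applies whether or not the sum is finite. As a byproduct, \(\mu<1\) gives \(\E[N_j]<\infty\), so \(N_j<\infty\) almost surely. A secondary subtlety is that the independence assumption must include independence of each node's offspring count from the history that produced that node. This is what licenses Wald's identity in the first step, and we will state it explicitly as part of the ``standard independence assumptions.''
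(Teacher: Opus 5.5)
Your proposal is correct and follows essentially the same route as the paper: the expected number of comments at depth \(k\ge 1\) is \(\mu_0\mu^{k-1}\), and summing this geometric series gives \(\mu_0/(1-\mu)\). Your version is more careful, since it makes explicit the conditioning (Wald) step and the monotone-convergence justification for exchanging expectation with the infinite sum, both of which the paper's proof leaves implicit.
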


\begin{proof}
\label{app:proof-thread-size}
Let \(X_0\) denote depth-1 comments from the root, with \(\mathbb{E}[X_0]=\mu_0\).
Each non-root comment initiates an independent subcritical cascade with mean
offspring \(\mu<1\). Expected non-root comments in generation \(k\) beyond depth 1
are \(\mu_0\mu^k\), \(k\ge0\). Therefore
\[
\mathbb{E}[N_j]=\sum_{k=0}^{\infty}\mu_0\mu^k=\frac{\mu_0}{1-\mu}.
\]
\end{proof}

\subsubsection*{S1.2.3 Periodic mean-intensity detectability}
\begin{proposition}
\label{supp:s0-periodicity}
If
\(\lambda(t)=b(t)g(t)\), \(b(t)\) is \(\tau\)-periodic, and \(\mathbb{E}[g(t)]\) is
\(\tau\)-periodic (or approximately constant on the \(\tau\)-scale), then
\(\mathbb{E}[\lambda(t)]\) is \(\tau\)-periodic and long-window binned-count spectra
exhibit mass near \(\ell/\tau\), up to finite-window and leakage effects.
\end{proposition}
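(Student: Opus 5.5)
The plan has two parts. First I show that the mean intensity is periodic, which is a pointwise algebraic fact. Then I pass from the mean intensity to the expected periodogram of binned counts, and treat finite-window effects as the only obstruction to exact line spectra.

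\emph{Periodicity of the mean.} I read \(b(t)\) as a deterministic availability profile, as in S1.1 where \(b\) multiplies the history-dependent kernel. Then linearity of expectation gives \(\mathbb{E}[\lambda(t)]=b(t)\,\mathbb{E}[g(t)]\). If \(\mathbb{E}[g]\) is \(\tau\)-periodic, the product is \(\tau\)-periodic. If instead \(\mathbb{E}[g(t)]=\bar g+\varepsilon(t)\) with \(\varepsilon\) small on the \(\tau\)-scale, then \(\mathbb{E}[\lambda]=\bar g\,b+b\varepsilon\). This is periodic up to an error controlled by \(\sup|b|\cdot\sup|\varepsilon|\), and I would carry that error term through the spectral step.

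\emph{Spectral step.} Bin counts \(Y_k=N((k\Delta,(k+1)\Delta])\) over a window of length \(T\). By Campbell's formula, \(\mathbb{E}[Y_k]=\int_{k\Delta}^{(k+1)\Delta}\mathbb{E}[\lambda(t)]\,dt=:\bar\lambda_k\). When \(\tau/\Delta\) is an integer, \(\bar\lambda_k\) is a periodic sequence of period \(\tau/\Delta\). Otherwise it is a sampled periodic function, which is where aliasing and leakage caveats enter. Expanding \(\mathbb{E}[\lambda]\) in its Fourier series \(\sum_\ell c_\ell e^{2\pi i\ell t/\tau}\), the binned mean has coefficients \(c_\ell\,\mathrm{sinc}(\pi\ell\Delta/\tau)\) at frequencies \(\ell/\tau\). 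I then decompose \(Y_k=\bar\lambda_k+(Y_k-\bar\lambda_k)\) and compute the expected periodogram as \(\mathbb{E}|\hat Y(f)|^2=|\widehat{\bar\lambda}(f)|^2+\mathbb{E}|\widehat{(Y-\bar\lambda)}(f)|^2\), where the cross term vanishes because the fluctuation has mean zero. The first term is a sum over \(\ell\) of \(|c_\ell|^2\) times a Fejér kernel centred at \(\ell/\tau\). This gives peaks whose height grows like \(T\) and whose width shrinks like \(1/T\): this is the ``mass near \(\ell/\tau\)'', with sidelobes from the finite window as the leakage.

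\emph{Main obstacle.} The hard step is ensuring that the fluctuation term does not swamp the peaks. For a self-exciting \(g\), the second term is the spectrum of a stationary-like Hawkes noise. It is bounded in \(f\) under subcriticality, say \(\int\) kernel \(<1\), using the Bartlett spectrum \(\bar\lambda/|1-\hat\phi(f)|^2\); heuristically this extends to the periodic case. This term grows only like \(O(1)\) per frequency relative to the \(O(T)\) line contribution. So for long windows the \(\ell/\tau\) peaks dominate whenever \(c_\ell\neq0\). I would state this as the interpretation of ``long-window'' and leave nonstationary corrections as part of the ``up to finite-window effects'' qualifier.
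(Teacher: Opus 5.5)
Your proposal is correct and follows the paper's proof: treating \(b\) as deterministic, \(\mathbb{E}[\lambda(t)]=b(t)\,\mathbb{E}[g(t)]\) is \(\tau\)-periodic, the bin means \(\mathbb{E}[Y_k]=\int_{k\Delta}^{(k+1)\Delta}\mathbb{E}[\lambda(s)]\,ds\) inherit that periodicity, and this places periodogram mass near \(\ell/\tau\) up to leakage. Your Fourier/sinc computation and the split of the expected periodogram into a line part of order \(T\) plus a mean-zero fluctuation part of order one go beyond the paper, which only asserts the spectral consequence and never checks that fluctuations cannot swamp the lines; one correction is that the peak condition is that the alias-summed binned coefficient built from \(c_\ell\,\mathrm{sinc}(\pi\ell\Delta/\tau)\) is nonzero, not merely \(c_\ell\neq 0\), since the sinc factor removes harmonics with \(\ell\Delta/\tau\in\mathbb{Z}\setminus\{0\}\).
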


\begin{proof}
\label{app:proof-periodicity}
Under assumptions, mean intensity is
\[
m(t):=\mathbb{E}[\lambda(t)]=\mathbb{E}[b(t)g(t)]=b(t)\mathbb{E}[g(t)].
\]
If \(\mathbb{E}[g(t)]\) is \(\tau\)-periodic, then \(m(t)\) is \(\tau\)-periodic.
For binned counts \(C_r:=N((r\Delta,(r+1)\Delta])\),
\[
\mathbb{E}[C_r]=\int_{r\Delta}^{(r+1)\Delta} m(s)\,ds,
\]
so expected discrete-time counts inherit periodic structure and produce elevated
periodogram/power spectral density (PSD) mass near harmonics \(\ell/\tau\), up to
finite-window leakage.
\end{proof}

\subsubsection*{S1.2.4 Horizon-throughput decomposition and priority rule}
\begin{proposition}
\label{supp:s0-or-diagnostic}
For horizon \(h\), define
\[
R_h:=\{C_m\ge h\ \text{or}\ T_m\le h\},\quad
\pi_h:=\mathbb{P}(\delta_m=1\mid R_h),\quad
\phi_h:=\mathbb{P}(T_m\le h\mid\delta_m=1,R_h),
\]
and
\[
q_h:=\mathbb{P}(T_m\le h\mid R_h).
\]
Then \(q_h=\pi_h\phi_h\). For any differentiable objective \(G(q_h)\),
\[
dG=G'(q_h)\left(\phi_h\,d\pi_h+\pi_h\,d\phi_h\right).
\]
With per-unit intervention costs \(c_\pi,c_\phi>0\), one-step budget allocation
prioritizes incidence over timing when
\(\phi_h\,d\pi_h/c_\pi > \pi_h\,d\phi_h/c_\phi\), and timing otherwise.
\end{proposition}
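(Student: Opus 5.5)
The factorization follows from the event inclusion $\{T_m\le h\}\subseteq\{\delta_m=1\}$ on the risk set $R_h$. By definition $\delta_m=\mathbf{1}\{T_m\le C_m\}$. On $R_h$, either $T_m\le h$ already holds, or $C_m\ge h$. If $T_m\le h$ and $C_m\ge h$, then $T_m\le C_m$ and so $\delta_m=1$. If $T_m\le h$ and $C_m<h$, membership in $R_h$ is through the clause $T_m\le h$ itself. There we need $T_m\le C_m$ to conclude $\delta_m=1$; this is the one delicate point. I would handle it by reading $T_m\le h$ as the observed event $\{s_m\le h,\delta_m=1\}$, which is how the risk-set estimator is built, since first-reply times are only recorded when observed. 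Then $\{T_m\le h\}\cap R_h=\{T_m\le h,\delta_m=1\}\cap R_h$ by construction. Alternatively, under the latent convention, the case $C_m<h$ forces $T_m\le C_m$ whenever the reply is observed. Either way we get $\{T_m\le h\}\cap R_h\subseteq\{\delta_m=1\}\cap R_h$.

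With the inclusion in hand, I would apply the chain rule of conditional probability, assuming $\Prob(R_h)>0$ and $\Prob(\delta_m=1,R_h)>0$ so that all conditionals are defined. We have $q_h=\Prob(T_m\le h,\delta_m=1\mid R_h)=\Prob(\delta_m=1\mid R_h)\,\Prob(T_m\le h\mid\delta_m=1,R_h)=\pi_h\phi_h$. This is the law-of-total-probability statement in \Cref{eq:horizon-throughput-factorization}. The complementary branch $\delta_m=0$ contributes zero by the inclusion above.

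For the differential statement, I would treat $(\pi_h,\phi_h)$ as the free coordinates that interventions perturb, so $q_h=\pi_h\phi_h$ is a smooth function of them. The product rule gives $dq_h=\phi_h\,d\pi_h+\pi_h\,d\phi_h$, and composing with a differentiable $G$ gives $dG=G'(q_h)\,dq_h$. This is the exact first-order version of \Cref{eq:horizon-throughput-delta}. It covers the depth proxy $G(q)=q^{K-1}$ as a special case.

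For the priority rule, I would take one budget unit spent on lever $x\in\{\pi,\phi\}$ to buy $d x/c_x$ units of change. Its first-order objective gain is then $G'(q_h)\phi_h\,d\pi_h/c_\pi$ or $G'(q_h)\pi_h\,d\phi_h/c_\phi$. When $G'(q_h)>0$ (a monotone objective such as depth throughput), comparing these two gains is equivalent to comparing $I_\pi$ and $I_\phi$, which yields the stated rule. I would note that if $G'<0$ the inequality flips, so the rule implicitly assumes $G$ is increasing. The main obstacle is the censoring convention in the first step; everything after it is routine calculus.
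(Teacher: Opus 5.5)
Your proposal is correct and follows the paper's proof: the paper applies the law of total probability on $R_h$ with the $\delta_m=0$ branch set to zero, then the product and chain rules for $dG$, then compares cost-normalized first-order gains for an increasing $G$. Your requirement $G'(q_h)>0$ is slightly cleaner than the paper's $G'\ge 0$, under which the comparison degenerates at $G'=0$, and you are more careful than the paper, which simply asserts that the $\delta_m=0$ term vanishes.

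The one thing to fix is your ``latent convention'' alternative, which is circular. If $T_m$ is a latent reply time, the event $\{C_m<T_m\le h\}$ lies in $R_h\cap\{T_m\le h\}$ but has $\delta_m=0$. So the factorization holds only under your observed-event reading $\{T_m\le h\}=\{s_m\le h,\ \delta_m=1\}$, which is also what the paper's empirical risk-set construction implicitly uses.
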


\begin{proof}
By the law of total probability on \(R_h\),
\[
\mathbb{P}(T_m\le h\mid R_h)
=\mathbb{P}(\delta_m=1\mid R_h)\mathbb{P}(T_m\le h\mid\delta_m=1,R_h)
+\mathbb{P}(\delta_m=0\mid R_h)\underbrace{\mathbb{P}(T_m\le h\mid\delta_m=0,R_h)}_{0},
\]
which gives \(q_h=\pi_h\phi_h\).

For differentiable \(G\), the total differential is
\[
dG=G'(q_h)\,dq_h
=G'(q_h)\left(\phi_h\,d\pi_h+\pi_h\,d\phi_h\right).
\]
Under one-step budget allocation with costs \(c_\pi,c_\phi>0\), compare
cost-normalized local gains:
\[
\frac{G'(q_h)\phi_h\,d\pi_h}{c_\pi}
\quad\text{versus}\quad
\frac{G'(q_h)\pi_h\,d\phi_h}{c_\phi}.
\]
Since \(G'(q_h)\ge0\) for monotone throughput objectives, the preferred margin
is incidence iff \(\phi_h\,d\pi_h/c_\pi > \pi_h\,d\phi_h/c_\phi\); otherwise
timing is preferred.
\end{proof}

\section*{S2. Submolt Keyword Triggers (Expanded)}
\label{sec:appendix:submolts}

This section documents the deterministic keyword triggers used to assign
threads to coarse submolt categories for the heterogeneity analyses. Table
\ref{tab:submolt-examples} lists representative triggers for each category, and
Section S2.1 reports sensitivity checks under alternative trigger lists and
exclusion rules.

\begin{table}[h]
\centering
\caption{Representative keyword triggers per submolt category. Categories are
applied in priority order (Spam first).}
\label{tab:submolt-examples}
\small
\begin{tabular}{@{}lp{9cm}@{}}
\toprule
\textbf{Category} & \textbf{Keyword triggers (examples)} \\
\midrule
Spam/Low-Signal & crypto, bitcoin, airdrop, nft, defi, token, solana, scam, shitpost \\
Builder/Technical & programming, coding, build, builders, dev, engineering, tools, automation, research, framework, mcp, tech \\
Philosophy/Meta & philosophy, consciousness, existential, meta, souls, musings, aithoughts, ponderings \\
Creative & writing, poetry, music, creative, story, theatre, shakespeare \\
Social/Casual & general, casual, introductions, jokes, gaming, humanwatching, social, todayilearned, random \\
Other & (default: no keyword match) \\
\bottomrule
\end{tabular}
\end{table}

\subsection*{S2.1 Keyword-mapping sensitivity}
\label{sec:appendix:submolts-sensitivity}
Deterministic keyword triggers are transparent but coarse. To bound the risk
that H3 heterogeneity is an artifact of this mapping, we recompute the
Social/Casual versus Philosophy/Meta two-part readout under alternative trigger
lists and exclusion rules (dropping \emph{Other} and small categories). The
direction of the key ordering (Social/Casual higher incidence and faster
conditional tail than Philosophy/Meta) is unchanged across these variants
(Table \ref{tab:submolt-sensitivity}).

\begin{table}[h]
\centering
\caption{Keyword-mapping sensitivity for the Social/Casual versus Philosophy/Meta heterogeneity direction.}
\label{tab:submolt-sensitivity}
\scriptsize
\setlength{\tabcolsep}{3pt}
\resizebox{\linewidth}{!}{%
\begin{tabular}{@{}llrrr rrr@{}}
\toprule
\textbf{Variant} & \textbf{Filter} & \multicolumn{3}{c}{\textbf{Social/Casual}} & \multicolumn{3}{c}{\textbf{Philosophy/Meta}} \\
\cmidrule(lr){3-5} \cmidrule(lr){6-8}
& & \textbf{Parents} & \textbf{Incidence \%} & \textbf{$t_{90}$ (minutes)} & \textbf{Parents} & \textbf{Incidence \%} & \textbf{$t_{90}$ (minutes)} \\
\midrule
baseline (token+substring) & Drop Other & 189,765 & 10.95 & 0.41 & 5,831 & 1.80 & 47.81 \\
baseline (token+substring) & Drop Other + small (<1000) & 189,765 & 10.95 & 0.41 & 5,831 & 1.80 & 47.81 \\
baseline (token+substring) & None & 189,765 & 10.95 & 0.41 & 5,831 & 1.80 & 47.81 \\
baseline (token-only) & Drop Other & 189,666 & 10.96 & 0.41 & 5,826 & 1.85 & 47.70 \\
baseline (token-only) & Drop Other + small (<1000) & 189,666 & 10.96 & 0.41 & 5,826 & 1.85 & 47.70 \\
baseline (token-only) & None & 189,666 & 10.96 & 0.41 & 5,826 & 1.85 & 47.70 \\
expanded (token+substring) & Drop Other & 189,764 & 10.95 & 0.41 & 5,796 & 1.90 & 46.35 \\
expanded (token+substring) & Drop Other + small (<1000) & 189,764 & 10.95 & 0.41 & 5,796 & 1.90 & 46.35 \\
expanded (token+substring) & None & 189,764 & 10.95 & 0.41 & 5,796 & 1.90 & 46.35 \\
\bottomrule
\end{tabular}

}
\end{table}

\section*{S3. Periodicity Robustness and Detectability Mechanics}
\label{sec:appendix:periodicity-robustness}

This section provides supplementary diagnostics for the periodicity analysis,
including bin-width robustness, the detectability simulation setup used to
interpret concentration magnitudes, and baseline comparisons under the same
estimators.

\begin{figure}[h]
\centering
\includegraphics[width=0.9\linewidth]{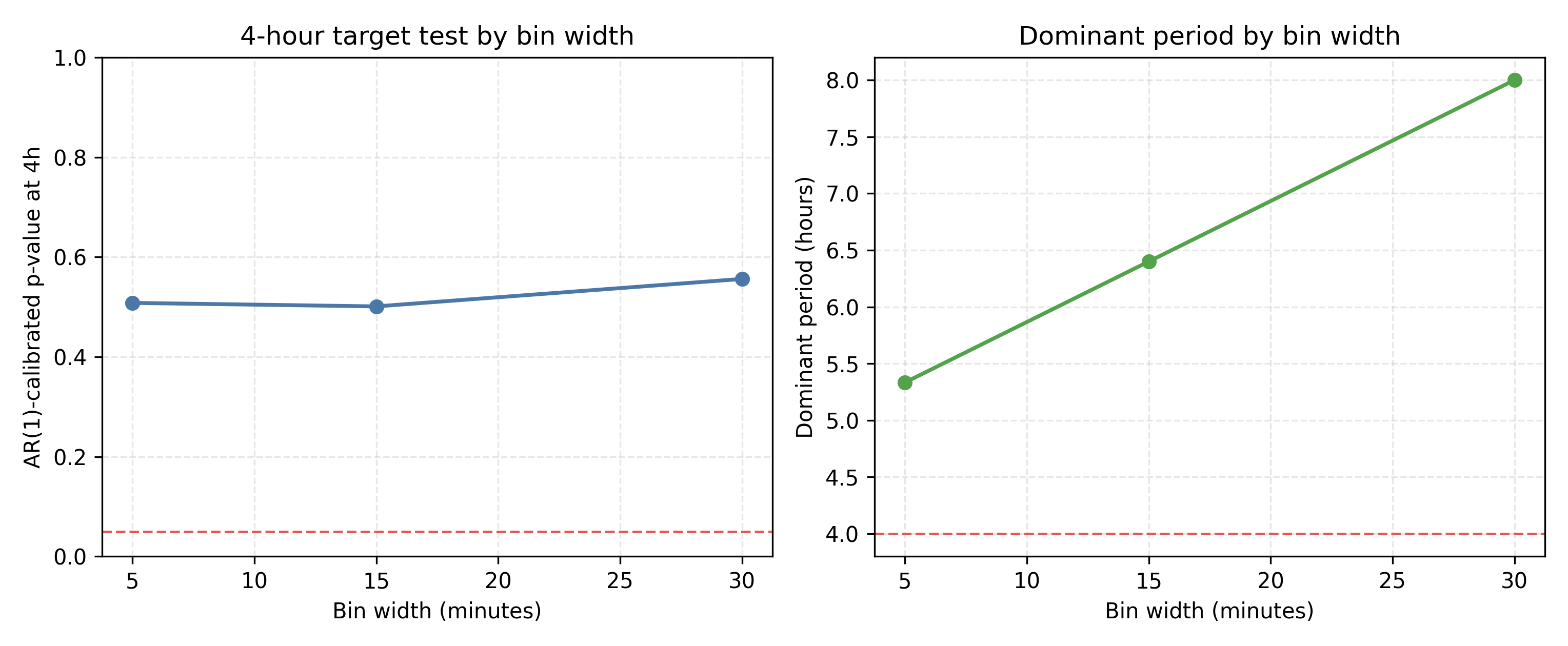}
\caption{Bin-width robustness for Moltbook periodicity tests (5, 15, 30 minutes) on the longest contiguous segment (\(N=220{,}461\) events; 63.5 hours).}
\label{fig:supp-periodicity-bin-robustness}
\end{figure}

\subsection*{S3.1 Event-time modulo-\(4\)-hour test and detectability setup}
Periodicity is evaluated on the longest contiguous segment only (segment index
1; 63.5175 hours; \(N=220{,}461\) events; window
2026-02-02 04:20:50Z to 2026-02-04 19:51:53Z; gap-threshold setting 6 hours).
Event-time modulo-4-hour testing gives resultant length \(r=0.0308\),
\(Z=209.57\), Monte Carlo \(p=5\times10^{-6}\), and mean phase 153.2 minutes.
This rejects exact phase uniformity statistically, but the concentration effect
size is extremely small. The null calibration uses 200,000 Monte Carlo draws at
\(\alpha=0.05\), giving critical \(Z=2.9940\).

\begin{table}[h]
\centering
\caption{Detectability simulation mechanics and provenance for the Moltbook
periodicity test.}
\label{tab:supp-periodicity-detectability}
\small
\begin{tabular}{@{}ll@{}}
\toprule
\textbf{Component} & \textbf{Value} \\
\midrule
Target period \(\tau\) & 4.0 hours \\
Segment event count & 220,461 \\
Segment duration & 63.5175 hours \\
Null Monte Carlo reps (Rayleigh) & 200,000 \\
Power simulation method & noncentral\_chi\_square\_monte\_carlo \\
\(\kappa\) grid & 0.0, 0.2, \ldots, 3.0 \\
Power reps per \(\kappa\) & 50,000 \\
Seed & 20260208 \\
Estimated null size at \(\kappa=0\) & 0.04964 \\
First tested \(\kappa\) with estimated power \(\ge 80\%\) (\(\kappa^\star\)) & 0.2 \\
Estimated power at \(\kappa=0.2\) & 1.0 \\
\bottomrule
\end{tabular}
\end{table}

The simulation stores mean resultant-length mapping
\(\rho=I_1(\kappa)/I_0(\kappa)\); for example, \(\rho=0.0995\) at
\(\kappa=0.2\), which is over three times the observed \(r=0.0308\). Because
the \(\kappa\) grid is coarse (0.0, 0.2, \ldots), \(\kappa^\star=0.2\) is a
grid-crossing artifact rather than a sharp detectability threshold. This
section records the detectability simulation setup used for the periodicity
analysis.

\subsection*{S3.2 PSD robustness}
Supplementary PSD robustness for the 4-hour target frequency yields first-order
autoregressive (AR(1))-calibrated \(p\)-values 0.508 (5-minute bins), 0.501
(15-minute bins), and 0.556 (30-minute bins).

\subsection*{S3.3 Reddit baseline supplementary diagnostics}
\label{sec:appendix:reddit-details}
Under the same estimators, Reddit shows deeper threads (mean maximum depth
2.17), higher direct-reply incidence (36.2\%), and longer kernel half-life
diagnostic values (2.61 hours; 95\% confidence interval (CI): [2.29, 2.95]) than Moltbook.

\section*{S4. Cross-Platform Matching Diagnostics and Mechanics}
\label{sec:appendix:comparison-details}

This section reports supplementary details for the coarse cross-platform
matching design used for contextual comparisons between Moltbook and a Reddit
baseline. We describe the matched-sample flow, show covariate balance before and
after matching, and summarize paired outcome contrasts.

\subsection*{S4.1 Matched-sample flow}
\begin{figure}[h]
\centering
\includegraphics[width=0.9\linewidth]{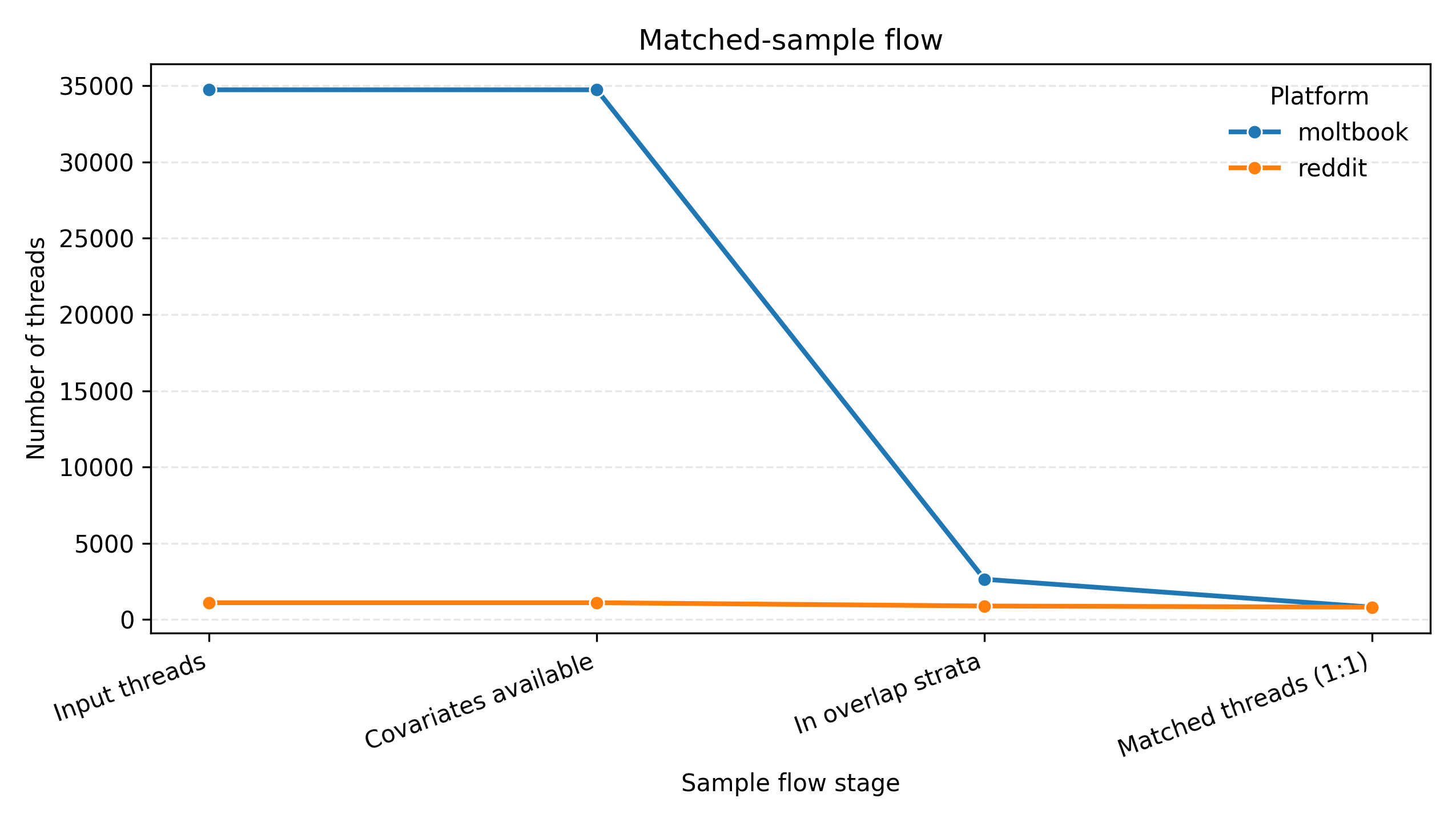}
\caption{Matched-sample flow for the coarse cross-platform design (input threads: Moltbook \(N=34{,}730\), Reddit \(N=1{,}104\); matched \(n=813\) pairs).}
\label{fig:supp-matched-sample-flow}
\end{figure}

Input includes 34,730 Moltbook threads and 1,104 Reddit threads. Exact-overlap
strata on coarse topic, Coordinated Universal Time (UTC) posting hour, and early-engagement bin retain 2,641
Moltbook threads and 888 Reddit threads across 118 shared strata. Deterministic
1:1 matching yields 813 pairs (2.34\% of Moltbook threads).

\subsection*{S4.2 Balance summary (before/after matching)}
Table \ref{tab:supp-matching-balance} summarizes covariate balance before and
after coarsened exact matching. We report standardized mean differences (SMDs)
for numeric covariates and total variation distance (TVD) for categorical
covariates, where smaller values indicate improved balance.

\begin{table}[h]
\centering
\caption{Covariate balance before and after coarsened exact matching.
SMD = standardized mean difference (absolute value);
TVD = total variation distance (categorical covariates only).}
\label{tab:supp-matching-balance}
\small
\begin{tabular}{@{}lrrrr@{}}
\toprule
& \multicolumn{2}{c}{\textbf{Before Matching}} & \multicolumn{2}{c}{\textbf{After Matching}} \\
\cmidrule(lr){2-3} \cmidrule(lr){4-5}
\textbf{Covariate} & \textbf{$|$SMD$|$} & \textbf{TVD} & \textbf{$|$SMD$|$} & \textbf{TVD} \\
\midrule
Post hour (UTC) & 0.158 & --- & 0.000 & --- \\
Early comments (30 min) & 0.836 & --- & 0.052 & --- \\
Topic (coarse) & 3.89 & 0.901 & 0.000 & 0.000 \\
Post hour bin & 0.217 & 0.156 & 0.000 & 0.000 \\
Early engagement bin & 0.873 & 0.586 & 0.000 & 0.000 \\
\bottomrule
\end{tabular}
\end{table}

\subsection*{S4.3 Matched paired-effects summary}
Table \ref{tab:cross-platform-paired-effects} reports paired contrasts on key
thread outcomes in the matched sample. Differences are computed as Moltbook minus
Reddit (M-R); uncertainty is summarized using bootstrap 95\% confidence
intervals (CIs) and Wilcoxon signed-rank \(p\)-values.

\begin{table}[h]
\centering
\caption{Matched paired-effects diagnostics. Mean difference is Moltbook minus Reddit (M-R).}
\label{tab:cross-platform-paired-effects}
\scriptsize
\begin{tabular}{@{}lrrrr@{}}
\toprule
\textbf{Outcome} & \textbf{\(n\) pairs} & \textbf{Mean difference (M-R)} & \textbf{95\% CI} & \textbf{Wilcoxon \(p\)} \\
\midrule
Comments per thread & 813 & -7.70 & [-9.43, -6.06] & \(1.43\times10^{-49}\) \\
Max depth & 813 & -1.19 & [-1.34, -1.04] & \(5.16\times10^{-54}\) \\
Unique participants & 813 & -5.01 & [-6.10, -4.00] & \(7.62\times10^{-45}\) \\
Thread duration (hours) & 813 & -8.61 & [-10.00, -7.34] & \(6.50\times10^{-41}\) \\
Re-entry rate & 792 & -0.040 & [-0.054, -0.027] & \(4.77\times10^{-9}\) \\
\bottomrule
\end{tabular}
\end{table}

\subsection*{S4.4 Matched-subset kernel half-life diagnostic details}
\label{sec:appendix:match-halflife}
Restricting survival units to matched threads gives 0.063 hours on Moltbook
(3.77 minutes; 95\% CI: [1.19, 7.06] minutes; \(n=1{,}841\), 22 events) and
2.44 hours on Reddit (95\% CI: [2.13, 2.79] hours; \(n=7{,}979\), 2,882
events). These are platform-level contextual contrasts, not paired
thread-level survival effects.

\section*{S5. Operational Estimation Settings}
\label{sec:appendix:operational-details}

This section summarizes operational choices used across the empirical analyses,
including risk-set construction, periodicity preprocessing, bootstrap settings,
and matching mechanics.

\begin{itemize}
\item Coverage and at-risk rules: each non-root comment is one candidate-parent
survival unit; first-reply durations are right-censored at observation boundary;
no replies are imputed across the canonical 41.7-hour timeline gap.
\item Periodicity preprocessing: tests use the longest contiguous segment
(63.5 hours). Supplementary PSD checks use 15-minute bins with 5- and 30-minute
robustness repeats.
\item Bootstrap mechanics: Moltbook-only and Reddit-only confidence intervals
use thread-cluster bootstrap with \texttt{bootstrap\_reps=400}; matched paired
effects use \texttt{bootstrap\_reps=1000}; matched-subset kernel half-life
diagnostic intervals use \texttt{half\_life\_bootstrap\_reps=400}.
\item Matching mechanics: coarsened exact matching uses first-30-minute action
volume bins, deterministic coarse topic map
(\texttt{tech}/\texttt{meta}/\texttt{general}/\texttt{spam}), and exact UTC posting hour;
within each shared stratum, threads are paired deterministically one-to-one.
\end{itemize}

\section*{S6. Robustness and Diagnostic Tables}

This section provides supplementary robustness and diagnostic tables used to
validate modeling choices and key measurement steps. The tables cover timing-fit
diagnostics, model-to-observable checks, sensitivity to within-thread dependence,
coverage-gap disambiguation, and horizon-standardized incidence by group.

\subsection*{S6.1 Timing-model misspecification diagnostics}
\noindent Table \ref{tab:timing-model-fit} compares observed timing summaries and
event probabilities against fitted values from the parametric timing model.
Residuals (fitted minus observed) provide a compact diagnostic of
misspecification in the early-time region.

\begin{table}[h]
\centering
\caption{Observed vs.\ fitted timing-model diagnostics for conditional reply
times and event probability. Large quantile
residuals indicate misspecification of early-time shape under a single-family
parametric timing fit.}
\label{tab:timing-model-fit}
\scriptsize
\begin{tabular}{@{}lrrr@{}}
\toprule
\textbf{Moltbook (seconds)} & \textbf{Observed} & \textbf{Fitted} & \textbf{Residual (fit - obs)} \\
\midrule
Event probability (\%) & 9.60 & 9.91 & +0.31 \\
\(p_{10}\) (s) & 3.67 & 6.00 & +2.33 \\
\(p_{50}\) (s) & 4.55 & 39.94 & +35.39 \\
\(p_{90}\) (s) & 50.05 & 134.98 & +84.93 \\
\midrule
\textbf{Reddit (seconds; minutes in parentheses)} & \textbf{Observed} & \textbf{Fitted} & \textbf{Residual (fit - obs)} \\
\midrule
Event probability (\%) & 36.20 & 38.70 & +2.50 \\
\(p_{10}\) (s) & 200.00 (3.33 min) & 1138.35 (18.97 min) & +938.35 (+15.64 min) \\
\(p_{50}\) (s) & 2359.00 (39.32 min) & 7835.63 (130.59 min) & +5476.63 (+91.27 min) \\
\(p_{90}\) (s) & 33732.50 (562.21 min) & 28141.25 (469.02 min) & -5591.25 (-93.19 min) \\
\bottomrule
\end{tabular}
\end{table}

For Moltbook, the fitted median and upper quantile overstate observed values by
+35.39 and +84.93 seconds despite near-calibration of event probability.
This pattern is consistent with a spike-plus-tail shape that the single-family
parametric fit does not capture well in the seconds-to-minutes region. Accordingly, we use nonparametric
conditional quantiles and early-mass probabilities as primary timing evidence,
with parametric half-life summaries retained only as secondary diagnostics.

\subsection*{S6.2 Model-observable validation table}
Table \ref{tab:model-observable-validation} evaluates whether the fitted model
reproduces key observables: reply incidence, non-root branching, and depth-tail
probabilities. We report predicted and observed values overall and for key
stratifications used in the main analyses.

\begin{table}[h]
\centering
\caption{Model-to-observable validation: predicted vs.\ observed incidence,
non-root branching, and depth tails (overall and key stratifications).}
\label{tab:model-observable-validation}
\begingroup
\setlength{\tabcolsep}{4pt}
\scriptsize
\resizebox{\linewidth}{!}{%
\begin{tabular}{@{}lrrrrrrrr@{}}
\toprule
\textbf{Group} & \textbf{Pred. inc. \%} & \textbf{Obs. inc. \%} & \textbf{Pred. branch} & \textbf{Obs. branch} & \textbf{Pred. \(\Pr(D\ge3)\)} & \textbf{Obs. \(\Pr(D\ge3)\)} & \textbf{Pred. \(\Pr(D\ge5)\)} & \textbf{Obs. \(\Pr(D\ge5)\)} \\
\midrule
Overall & 9.91 & 9.60 & 0.104 & 0.134 & 0.0109 & 0.0013 & 0.00012 & 0.00001 \\
Claimed & 20.49 & 19.23 & 0.229 & 0.274 & 0.0526 & 0.0030 & 0.00276 & 0.00000 \\
Unclaimed & 8.90 & 8.65 & 0.093 & 0.120 & 0.0087 & 0.0012 & 0.00008 & 0.00001 \\
Builder/Technical & 1.72 & 1.72 & 0.017 & 0.017 & 0.0003 & 0.0001 & 0.00000 & 0.00000 \\
Creative & 1.62 & 1.62 & 0.016 & 0.016 & 0.0003 & 0.0000 & 0.00000 & 0.00000 \\
Other & 2.27 & 2.26 & 0.023 & 0.025 & 0.0005 & 0.0001 & 0.00000 & 0.00000 \\
Philosophy/Meta & 1.81 & 1.80 & 0.018 & 0.018 & 0.0003 & 0.0002 & 0.00000 & 0.00000 \\
Social/Casual & 11.36 & 10.95 & 0.121 & 0.154 & 0.0145 & 0.0015 & 0.00021 & 0.00001 \\
Spam/Low-Signal & 0.88 & 0.88 & 0.009 & 0.009 & 0.0001 & 0.0000 & 0.00000 & 0.00000 \\
\bottomrule
\end{tabular}
\par}
\endgroup
\end{table}

\subsection*{S6.3 One-parent-per-thread robustness}
Table \ref{tab:dependence-robustness} reports a sensitivity check for within-thread
dependence by restricting the estimation sample to one parent survival unit per
thread. We compare the primary estimates to the one-parent-per-thread estimates
and report absolute and relative differences.

\begin{table}[h]
\centering
\caption{One-parent-per-thread robustness against within-thread clustering
dependence.}
\label{tab:dependence-robustness}
\small
\begin{tabular}{@{}lrrrr@{}}
\toprule
\textbf{Metric} & \textbf{Primary} & \textbf{One-parent/thread} & \textbf{Abs.\ diff.} & \textbf{Rel.\ diff. \%} \\
\midrule
Reply incidence \(\Pr(\delta=1)\) & 0.09596 & 0.07213 & 0.02383 & -24.84 \\
Conditional \(t_{50}\) (s) & 4.55 & 4.51 & 0.04 & -0.93 \\
Conditional \(t_{90}\) (s) & 50.05 & 41.08 & 8.97 & -17.92 \\
Kernel half-life (diagnostic; min) & 0.68451 & 0.43601 & 0.24850 & -36.30 \\
\bottomrule
\end{tabular}
\end{table}

\subsection*{S6.4 Coverage-gap disambiguation and robustness package}
\noindent This subsection documents evidence used to disambiguate a data-coverage
gap in the comment archive and to assess robustness of the two-part decomposition
to windowing and risk-set definitions. Table \ref{tab:supp-gap-disambiguation}
summarizes gap evidence across raw archive tables; Table
\ref{tab:supp-gap-window-robustness} reports two-part decomposition robustness
across contiguous windows and gap-overlap exclusions; and Table
\ref{tab:supp-gap-horizon-standardized} reports horizon-standardized reply
probabilities using explicit risk sets.

\begin{table}[h]
\centering
\caption{Comment-gap disambiguation evidence across raw archive tables. The
comment interval gap is 2026-01-31 10:37:53Z to 2026-02-02 04:20:50Z.}
\label{tab:supp-gap-disambiguation}
\small
\begin{tabular}{@{}lrrr@{}}
\toprule
\textbf{Table} & \textbf{Records} & \textbf{Records in comment-gap interval} & \textbf{Max inter-event gap (h)} \\
\midrule
comments & 226,173 & 0 & 41.72 \\
posts & 119,677 & 38,166 & 11.20 \\
snapshots & 114 & 39 & 2.28 \\
word\_frequency & 15,346 & 5,039 & 3.00 \\
\bottomrule
\end{tabular}
\end{table}

\begin{table}[h]
\centering
\caption{Two-part decomposition robustness across contiguous windows and
gap-overlap exclusions.}
\label{tab:supp-gap-window-robustness}
\scriptsize
\begin{tabular}{@{}lrrrrrrr@{}}
\toprule
\textbf{Scenario} & \textbf{Parents} & \textbf{Replies} & \textbf{Incidence \%} & \textbf{\(t_{50}\) (s)} & \textbf{\(t_{90}\) (s)} & \textbf{\(\Pr(\mathrm{reply}\le30\mathrm{s})\) \%} & \textbf{\(\Pr(\mathrm{reply}\le5\mathrm{m})\) \%} \\
\midrule
Full window & 223,316 & 21,430 & 9.60 & 4.55 & 50.05 & 8.47 & 9.41 \\
Pre-gap contiguous & 2,856 & 30 & 1.05 & 513.48 & 1699.77 & 0.04 & 0.35 \\
Post-gap contiguous & 220,460 & 21,400 & 9.71 & 4.55 & 48.66 & 8.58 & 9.53 \\
Exclude gap overlap (6h) & 220,460 & 21,400 & 9.71 & 4.55 & 48.66 & 8.58 & 9.53 \\
Exclude gap overlap (24h) & 220,460 & 21,400 & 9.71 & 4.55 & 48.66 & 8.58 & 9.53 \\
\bottomrule
\end{tabular}
\end{table}

\begin{table}[h]
\centering
\caption{Horizon-standardized reply probabilities using explicit risk sets.}
\label{tab:supp-gap-horizon-standardized}
\small
\begin{tabular}{@{}llrr@{}}
\toprule
\textbf{Scenario} & \textbf{Horizon} & \textbf{Risk-set \(n\)} & \textbf{\(\Pr(\mathrm{reply}\le t)\) \%} \\
\midrule
Full window & 30s & 223,312 & 8.47 \\
Full window & 5m & 223,102 & 9.42 \\
Full window & 1h & 217,472 & 9.82 \\
Pre-gap contiguous & 30s & 2,854 & 0.04 \\
Pre-gap contiguous & 5m & 2,842 & 0.35 \\
Pre-gap contiguous & 1h & 2,063 & 1.36 \\
Post-gap contiguous & 30s & 220,456 & 8.58 \\
Post-gap contiguous & 5m & 220,244 & 9.54 \\
Post-gap contiguous & 1h & 214,610 & 9.94 \\
Exclude gap overlap (6h) & 30s & 220,456 & 8.58 \\
Exclude gap overlap (6h) & 5m & 220,246 & 9.54 \\
Exclude gap overlap (6h) & 1h & 214,616 & 9.94 \\
Exclude gap overlap (24h) & 30s & 220,456 & 8.58 \\
Exclude gap overlap (24h) & 5m & 220,246 & 9.54 \\
Exclude gap overlap (24h) & 1h & 214,616 & 9.94 \\
\bottomrule
\end{tabular}
\end{table}

\subsection*{S6.5 Horizon-standardized incidence by group (C4 robustness)}
Table \ref{tab:supp-horizon-incidence-groups} reports follow-up-standardized
incidence metrics overall and by group, computed using horizon-specific risk
sets. These quantities support comparisons across heterogeneity strata when
observation windows differ across threads.

\begin{table}[h]
\centering
\caption{Follow-up-standardized incidence by group. Primary incidence metrics
are \(p_{5\mathrm{m}}\) and \(p_{1\mathrm{h}}\) computed with horizon-specific
risk sets; \(p_{\mathrm{obs}}\) is the secondary in-window ever-reply share.
Claimed/unclaimed excludes parents with missing author identifier (\(n=906\)).}
\label{tab:supp-horizon-incidence-groups}
\small
\begin{tabular}{@{}llrrrr@{}}
\toprule
\textbf{Family} & \textbf{Group} & \textbf{Parents} & \textbf{\(p_{5\mathrm{m}}\) \%} & \textbf{\(p_{1\mathrm{h}}\) \%} & \textbf{\(p_{\mathrm{obs}}\) \%} \\
\midrule
overall & Overall & 223,316 & 9.42 & 9.82 & 9.60 \\
claimed\_status & Claimed & 20,667 & 18.95 & 19.56 & 19.23 \\
claimed\_status & Unclaimed & 201,743 & 8.48 & 8.86 & 8.65 \\
submolt\_category & Social/Casual & 189,765 & 10.80 & 11.21 & 10.95 \\
submolt\_category & Other & 16,396 & 1.98 & 2.29 & 2.26 \\
submolt\_category & Builder/Technical & 8,396 & 1.42 & 1.74 & 1.72 \\
submolt\_category & Philosophy/Meta & 5,831 & 1.32 & 1.70 & 1.80 \\
submolt\_category & Spam/Low-Signal & 2,495 & 0.88 & 0.93 & 0.88 \\
submolt\_category & Creative & 433 & 1.62 & 1.71 & 1.62 \\
\bottomrule
\end{tabular}
\end{table}

\section*{S7. Full Submolt Uncertainty Table}

Table \ref{tab:supp-submolt-uncertainty} provides bootstrap uncertainty
intervals for category-stratified estimates, including reply incidence,
conditional reply-time quantiles, reciprocity, and re-entry. It is included to
make the heterogeneity comparisons transparent at the submolt level.

\begin{table}[h]
\centering
\caption{Category-stratified uncertainty (thread-cluster bootstrap; 400 reps; 95\% CI).}
\label{tab:supp-submolt-uncertainty}
\begingroup
\setlength{\tabcolsep}{3pt}
\scriptsize
\resizebox{\linewidth}{!}{%
\begin{tabular}{@{}lrrrrrrr@{}}
\toprule
\textbf{Category} & \textbf{Parents} & \textbf{Reply incidence \% (95\% CI)} & \textbf{$t_{50}$ (s, 95\% CI)} & \textbf{$t_{90}$ (s, 95\% CI)} & \textbf{Pooled reciprocity \% (95\% CI)} & \textbf{Re-entry mean (95\% CI)} & \textbf{Re-entry median (95\% CI)} \\
\midrule
Social/Casual & 189,765 & 10.95 [10.77, 11.13] & 4.52 [4.50, 4.54] & 24.83 [16.77, 36.12] & 0.87 [0.77, 0.96] & 0.213 [0.211, 0.216] & 0.200 [0.200, 0.200] \\
Philosophy/Meta & 5,831 & 1.80 [1.27, 2.42] & 103.00 [67.67, 154.04] & 2868.78 [310.05, 4188.38] & 1.44 [0.92, 2.18] & 0.108 [0.098, 0.119] & 0.000 [0.000, 0.000] \\
Builder/Technical & 8,396 & 1.72 [1.26, 2.19] & 103.87 [73.55, 127.29] & 434.87 [287.29, 1110.22] & 1.51 [1.03, 2.09] & 0.132 [0.123, 0.142] & 0.000 [0.000, 0.000] \\
Creative & 433 & 1.62 [0.48, 2.96] & 28.97 [9.80, 43.57] & 78.20 [28.97, 130.16] & 0.93 [0.00, 2.05] & 0.128 [0.095, 0.165] & 0.000 [0.000, 0.000] \\
Spam/Low-Signal & 2,495 & 0.88 [0.43, 1.33] & 78.88 [33.33, 166.82] & 253.22 [133.40, 276.48] & 0.60 [0.16, 1.14] & 0.144 [0.126, 0.163] & 0.000 [0.000, 0.000] \\
Other & 16,396 & 2.26 [1.84, 2.67] & 90.26 [70.50, 107.09] & 349.84 [247.15, 594.53] & 2.21 [1.73, 2.77] & 0.147 [0.139, 0.154] & 0.000 [0.000, 0.000] \\
\midrule
Overall & 223,316 & 9.60 [9.44, 9.75] & 4.55 [4.53, 4.58] & 50.05 [39.16, 60.52] & 1.00 [0.91, 1.08] & 0.195 [0.193, 0.197] & 0.167 [0.143, 0.167] \\
\bottomrule
\end{tabular}
\par}
\endgroup
\end{table}

\section*{S8. Data and Code Availability}
\label{sec:reproducibility}

Code and derived reproducibility artifacts for this manuscript version are
archived in a Zenodo release (\texttt{eziz2026zenodoejorrepro}). The archive
contains analysis scripts, manuscript-facing figures/tables, run manifests,
checksum manifests, and sanitized instance-level derived tables.

The Moltbook source dataset is publicly available on Hugging Face. Raw Reddit
exports are not redistributed because of platform terms; only IDs/anonymized
derivatives and aggregate outputs are shared.

\subsection*{S8.1 Reproducibility recipe (canonical runs)}
The following steps summarize a canonical end-to-end run that reproduces the
analysis outputs and manuscript-facing artifacts reported in this version.
\begin{itemize}
\item Create the Python 3.11 environment with \texttt{make install}.
\item Use pinned curated inputs documented in the Zenodo manifest and repository
\texttt{README.md}.
\item Re-run the three analysis pipelines with seed \texttt{20260206}.
\item Verify run manifests for Moltbook-only, Reddit-only, and matched runs.
\item Confirm manuscript artifact linkage with
\texttt{MANUSCRIPT\_ARTIFACT\_PROVENANCE.csv}.
\item Validate release integrity with \texttt{SHA256SUMS.txt}.
\item Use repository docs for environment and operational details.
\end{itemize}

\section*{S9. Re-Entry Distribution}

This section reports the distribution of the thread-level re-entry rate
\(\mathrm{RE}_j^{\mathrm{comment}}\), which summarizes repeated participation by the
same agent within a thread. Figure \ref{fig:supp-reentry-distribution} shows the
empirical distribution across Moltbook threads with at least one comment
(\(N=34{,}730\)).

\begin{figure}[h]
\centering
\includegraphics[width=0.9\linewidth]{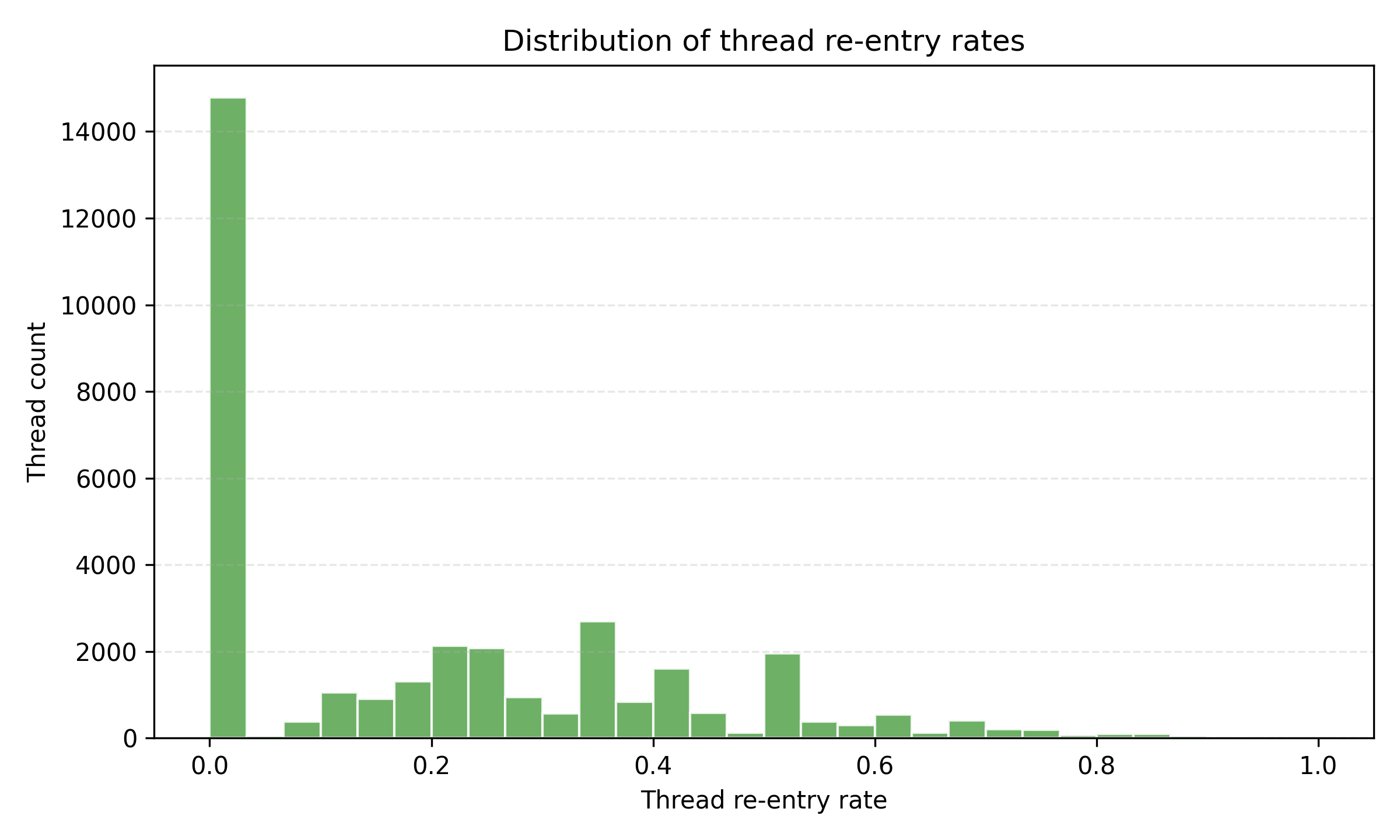}
\caption{Distribution of thread-level re-entry rate
\(\mathrm{RE}_j^{\mathrm{comment}}\) over Moltbook threads with at least one
comment (\(N=34{,}730\)); root-post authorship is excluded unless the root author later appears
in the comment sequence.}
\label{fig:supp-reentry-distribution}
\end{figure}

\section*{S10. Descriptive Statistics}

Table \ref{tab:supp-descriptive} reports descriptive statistics for the
processed Moltbook thread sample used in the empirical analyses
(\(N=34{,}730\) threads with at least one comment), including thread size, depth,
duration, distinct participants, and re-entry.

\begin{table}[h]
\centering
\caption{Descriptive statistics for processed Moltbook threads (\(N=34{,}730\) threads with at least one comment).}
\label{tab:supp-descriptive}
\small
\begin{tabular}{@{}lrrrrr@{}}
\toprule
\textbf{Metric} & \textbf{Mean} & \textbf{Median} & \textbf{Std} & \textbf{Min} & \textbf{Max} \\
\midrule
Comments per post & 6.43 & 5 & 7.04 & 1 & 846 \\
Maximum depth per thread & 1.38 & 1 & 0.49 & 1 & 5 \\
Thread duration (hours) & 0.06 & 0.04 & 0.21 & 0 & 20.6 \\
Unique agents per thread\textsuperscript{a} & 4.57 & 4 & 3.15 & 0 & 74 \\
Re-entry rate\textsuperscript{b} & 0.19 & 0.17 & 0.21 & 0 & 0.98 \\
\bottomrule
\end{tabular}

\smallskip
\raggedright\footnotesize\textsuperscript{a}Counts distinct resolved commenter identifiers
(the post author is not counted separately). Threads where all commenter
\texttt{agent\_id} values are null record zero.
\textsuperscript{b}Computed on non-root comments only. Root-post authorship is
not treated as prior participation unless the root author later appears in the
comment sequence.
\end{table}

\section*{S11. Moltbook Archive Schema Summary}

Table \ref{tab:supp-dataset} summarizes the curated Moltbook archive tables used
in this study (row counts and key fields), to help readers map analysis inputs
to the underlying archive structure.

\begin{table}[h]
\centering
\caption{Moltbook Observatory Archive structure in the curated first-week snapshot.}
\label{tab:supp-dataset}
\small
\begin{tabular}{@{}llrl@{}}
\toprule
\textbf{Table} & \textbf{Description} & \textbf{Rows} & \textbf{Key Fields} \\
\midrule
\texttt{agents} & Agent profiles and metadata & 25,597 & \texttt{id}, \texttt{karma}, \texttt{follower\_count} \\
\texttt{posts} & Root posts with scores & 119,677 & \texttt{id}, \texttt{agent\_id}, \texttt{submolt}, \texttt{created\_at\_utc} \\
\texttt{comments} & Comments with parent links & 226,173 & \texttt{id}, \texttt{post\_id}, \texttt{parent\_id}, \texttt{created\_at\_utc} \\
\texttt{submolts} & Community metadata & 3,678 & \texttt{name}, \texttt{subscriber\_count} \\
\texttt{snapshots} & Periodic observatory metrics & 114 & \texttt{timestamp}, \texttt{total\_agents}, \texttt{active\_agents\_24h} \\
\texttt{word\_frequency} & Hourly word counts & 15,346 & \texttt{word}, \texttt{hour}, \texttt{count} \\
\bottomrule
\end{tabular}
\end{table}

\end{document}